\theoremstyle{plain}
\newtheorem{theorem}{Theorem}
\newtheorem{lemma}{Lemma}
\newtheorem{corollary}{Corollary}
\theoremstyle{definition}
\newtheorem{defi}{Definition}
\newtheorem{remark}{Remark}
\newtheorem{example}{Example}
\journal{}
\date{}
\begin{document}
\begin{frontmatter}
\title{Distributed exact quantum algorithms for Deutsch-Jozsa problem}

\author{Hao Li$^{1,2}$}
\author{Daowen Qiu$^{1,2,4}$\corref{one}}
\author{Le Luo$^{3,4}$}
\cortext[one]{Corresponding author (D. Qiu). {\it E-mail addresses:} issqdw@mail.sysu.edu.cn (D. Qiu)}

\address{
   $^{1}$Institute of Quantum Computing and Computer Theory, School of Computer Science and Engineering, Sun Yat-sen University, Guangzhou 510006, China\\
   $^{2}$The Guangdong Key Laboratory of Information Security Technology, Sun Yat-sen University, 510006, China\\
     $^{3}$School of Physics and Astronomy, Sun Yat-sen University, Zhuhai 519082, China\\
 $^{4}$QUDOOR Technologies Inc., Guangzhou,  China }

\begin{abstract}

Deutsch-Jozsa (DJ) problem is one of the most important problems demonstrating the power of quantum algorithm. DJ problem can be described as a Boolean function $f$: $\{0,1\}^n\rightarrow \{0,1\}$ with promising it is either constant or balanced, and the purpose is to determine which type it is. DJ algorithm can solve it exactly with one query.   In this paper, we first discover the inherent structure of  DJ problem in distributed scenario by giving a number of equivalence characterizations between $f$ being constant (balanced) and some properties of $f$'s subfunctions, and then we  propose three distributed exact quantum algorithms for solving DJ problem. 
Our algorithms have essential acceleration over distributed classical deterministic algorithm, and
  can be extended to the case  of multiple computing nodes.  Compared with  DJ algorithm, our algorithms can reduce the number of qubits and the depth of circuit implementing a single query operator. Therefore, we find that the structure of problem should be clarified for designing distributed quantum algorithm to solve it.

\end{abstract}

\begin{keyword}
Deutsch-Jozsa problem \sep Distributed quantum algorithm \sep Circuit depth  
\end{keyword}

\end{frontmatter}


\section{Introduction}\label{sec:introduction}

Quantum computing \cite{nielsen_quantum_2010
} has been proved to have great potential in factorizing  large numbers \cite{shor_polynomial-time_1997},  searching unordered database \cite{grover_fast_1996} and solving linear systems of equations \cite{HHL_2009}. 
Due to the limitations of current physical technology, large-scale universal quantum computers have not been realized. At present, quantum technology has been entered to the Noisy Intermediate-scale Quantum (NISQ) era \cite{preskill_quantum_2018}, it is possible for us to implement quantum algorithms on middle-scale quantum circuits.

Distributed quantum computing is an innovative computing architecture, which combines quantum computing with distributed computing \cite{goos_distributed_2003,beals_efficient_2013,Qiu2017DQC,Tan2022DQCSimon,Qiu22,Xiao2023DQAShor}. In distributed quantum computing, multiple  computing nodes communicate with each other and cooperate to complete computing tasks, the circuit size and depth can be reduced
, which  is beneficial to reduce the noise of circuits. 


As mentioning above, DJ problem can be described as a Boolean function $f:\{0,1\}^n\rightarrow \{0,1\}$ with promising it is either constant or balanced, and the purpose is to decide what function  it is. DJ algorithm can solve it exactly with one query, but it is easy to know that the query complexity of classical deterministic algorithm is $2^{n-1}+1$ in the worst case.
DJ problem is the first  to show that quantum algorithm has essential acceleration compared with classical deterministic algorithm \cite{deutsch_rapid_1992}, which is an extension of  Deutsch's problem \cite{deutsch_quantum_1985}.  Furthermore, Qiu and Zheng \cite{Generalized Deutsch-Jozsa proble_2018} extended DJ problem to  generalized DJ problem, and gave an optimal algorithm for  generalized DJ problem. DJ algorithm  \cite{deutsch_rapid_1992} presents a basic framework of quantum algorithms, and in a way provides inspiration for Simon’s algorithm \cite{simon_power_1997}, Shor’s algorithm \cite{shor_polynomial-time_1997} and Grover’s algorithm \cite{grover_fast_1996}. In particular, Qiu and Zheng  \cite{Revisiting Deutsch-Jozsa algorithm_2018} proved that  DJ algorithm can compute any symmetric partial Boolean function with exact quantum 1-query complexity.
 In fact, DJ algorithm can be used to test some properties of Boolean functions \cite{Testing Boolean Functions Properties_2021}.
In addition, it is worth mentioning that  there is also relevant  physical realization on  DJ algorithm \cite{Implementation of the Deutsch-Jozsa Algorithm_2013}.

In this paper, we first discover the inherent structure of DJ problem in distributed scenario by means of presenting a number of equivalence characterizations concerning the features of DJ problem.  More exactly,
let DJ problem be represented by a Boolean function $f:\{0,1\}^n\rightarrow\{0,1\}$, $f$ is decomposed  into $2^t$ subfunctions $f_w:\{0,1\}^{n-t}\rightarrow\{0,1\}$, where $f_w(u)=f(uw)$ $(u\in\{0,1\}^{n-t}, w\in\{0,1\}^t)$.
Then we prove a number of equivalence relationships between $f$ being constant (balanced) and some properties of its subfunctions.
In particular, we
 design three distributed exact quantum algorithms  for solving it, while we utilize quantum teleportation technology \cite{bennett_teleporting_1993} to realize quantum gates spanning multiple computing nodes \cite{caleffi_quantum_2018}. 
In our  algorithms, after the oracle queries for multiple computing nodes, the specific unitary operator we design is used to process the query results. Eventually,  the quantum state containing the structure consistent with DJ problem is obtained, and then the solution of DJ problem is concluded exactly by measuring this state.

Actually, we design three distributed exact quantum algorithms to solve DJ problem, namely Algorithm \ref{algorithm3}, Algorithm \ref{algorithm2} and Algorithm \ref{algorithm5}. The design of Algorithm \ref{algorithm3} 
is to associate the two subfunctions $f_0$ and $f_1$ of  $f$  with xor operation, extracting the value of $f_0$ by using the Pauli operator $Z$. However, this algorithm can only solve DJ problem in distributed scenario with two subfunctions.

The design of Algorithm \ref{algorithm2} draws on the ideas and methods of distributed Simon's algorithm by Tan, Xiao, and Qiu et al \cite{Tan2022DQCSimon} and HHL algorithm \cite{HHL_2009}. In fact, Algorithm \ref{algorithm2} uses the specific unitary operator to perform addition and subtraction operations on the values of multiple subfunctions $f_w$ $(w\in\{0,1\}^t)$ of $f$,  and then  integrates the specific controlled rotation operator from HHL algorithm \cite{HHL_2009} to extract the values of the joint operation of the multiple subfunctions.

  Algorithm \ref{algorithm5} combines the  ideas and methods of Algorithm \ref{algorithm3} and Algorithm \ref{algorithm2}, which can solve DJ problem in distributed scenario with multiple computing nodes, and some of its single quantum gates require less qubits than Algorithm \ref{algorithm2}. Algorithm \ref{algorithm5} also draws on the ideas of distributed Simon's algorithm \cite{Tan2022DQCSimon} and HHL algorithm \cite{HHL_2009}.

The algorithms we  design have the following advantages. First, compared with the distributed quantum algorithm for solving  DJ problem in \cite{avron_quantum_2021}, our algorithms are exact and have better scalability. Second, compared with distributed classical deterministic  algorithm, our algorithms have  exponential advantage in query complexity. Finally, compared with  DJ algorithm, the single query operator in our algorithms requires fewer qubits, and the depth of circuit is reduced \cite{Elementary_gates_1995}, which has better anti-noise performance.

The remainder of the paper is organized as follows.  
In Section \ref{sec:preliminaries}, after briefly recalling  DJ problem and DJ algorithm, we  describe DJ problem in distributed scenario and  review the distributed  DJ  algorithm with errors  \cite{avron_quantum_2021} that motivates the study of designing distributed exact DJ algorithm. 
In Section \ref{sec:Characterization of the structure of DJ problem in distributed scenario}, we study a number of properties of DJ problem in  distributed scenario in detail, which are key to designing distributed exact DJ algorithms.
In section \ref{sec:Distributed quantum algorithm for  DJ problem (Algorithm 1)}, we first describe the ideas and methods for designing distributed exact DJ algorithms, and then  give  Algorithm \ref{algorithm3}. 
 Algorithm \ref{algorithm3} can solve  DJ problem in  distributed scenario with only  two computing nodes.
In section \ref{sec:Distributed quantum algorithm for  DJ problem (Algorithm 2)},  we describe  Algorithm \ref{algorithm2}. Algorithm \ref{algorithm2} can solve  DJ problem in  distributed scenario with multiple computing nodes, and its  key technique is to use the controlled rotation operator in  HHL algorithm \cite{HHL_2009} to extract the information related to  the structure of  DJ problem.
In section \ref{sec:Distributed quantum algorithm for  DJ problem (Algorithm 3)}, we describe  Algorithm \ref{algorithm5} and also compare the three algorithms we propose. Also, Algorithm \ref{algorithm5}  can solve  DJ problem in  distributed scenario with multiple computing nodes,  and   fewer qubits in this algorithm are required to implement certain unitary operators  compared to Algorithm \ref{algorithm2}. Finally,
in Section \ref{sec:conclusions}, we conclude with a summary. 


\section{Preliminaries}\label{sec:preliminaries}
In this section, we first recall DJ problem and DJ algorithm, and then describe DJ problem in distributed scenario as well as review the distributed  DJ  algorithm with errors \cite{avron_quantum_2021}.  We assume that the readers are familiar with  liner algebra, probability theory and
basic notations in quantum computing \cite{nielsen_quantum_2010}.

\subsection{DJ problem and DJ algorithm}
  DJ problem can be described as follows: Consider a Boolean function $f:\{0,1\}^n \rightarrow \{0,1\}$, where we have the promise that $f$ is either constant or balanced. 

Call $f$ is constant if and only if 
\begin{equation}
f(x) \equiv 0
\end{equation}
 or 
 \begin{equation}
 f(x) \equiv 1. 
\end{equation}

Call $f$ is balanced if and only if  
\begin{equation}
|\{x|f(x)=0\}|=|\{x|f(x)=1\}|
=2^{n-1}. 
\end{equation}

Suppose we have an oracle $O_f$ that can query the value of Boolean function $f$, where the  oracle $O_f$ is defined as:
\begin{align}
O_{f}\ket{x}\ket{b}=&\ket{x}\ket{b\oplus f(x)},
\end{align}
 where $x\in\{0,1\}^{n}$ and $b\in\{0,1\}$.

The goal of   DJ problem is to determine whether $f$ is constant or balanced.



 DJ algorithm is  the first quantum algorithm that is essentially faster than
any possible deterministic classical algorithm for solving DJ problem. 
In DJ algorithm, it first generates a uniform superposition state by acting on an initial state with  Hadamard transform, and then the query operator $O_f$ and   Hadamard transform are applied sequently, finally it performs measurement to determine whether $f$ is constant or balanced. The details of DJ algorithm are described in \ref{DJ Algorithm}.

\subsection{DJ problem in distributed scenario}
In the following, we first describe DJ problem in distributed scenario with two distributed computing nodes,  then describe DJ problem in distributed scenario with multiple distributed computing nodes.

In the case of two distributed computing nodes,  Boolean function $f$ corresponding to DJ problem is divided into two subfunctions $f_0:\{0,1\}^{n-1}\rightarrow\{0,1\}$ and $f_1:\{0,1\}^{n-1}\rightarrow\{0,1\}$ as follows.
For all $ u \in \{0,1\}^{n-1}$, let
\begin{equation}
f_0(u)=f(u0)
\end{equation}
and
\begin{equation}
f_1(u)=f(u1).
\end{equation}

  Suppose Alice has an oracle $O_{f_0}$ that can query $f_0(u)$ for all $u \in \{0,1\}^{n-1}$,  Bob has an oracle $O_{f_1}$ that can query  $f_1(u)$ for all $u \in \{0,1\}^{n-1}$, 
where the oracle $O_{f_0}$ and $O_{f_1}$  are defined as:
\begin{align}
O_{f_0}\ket{u}\ket{b}=&\ket{u}\ket{b\oplus f_0(u)},\label{Of0}\\
O_{f_1}\ket{u}\ket{b}=&\ket{u}\ket{b\oplus f_1(u)},\label{Of1}
\end{align}
 where $u\in\{0,1\}^{n-1}$ and $b\in\{0,1\}$.

They need to determine whether $f$ is constant or balanced by querying their own oracle and communicating with each other as few times as possible. 


In the case of multiple distributed computing nodes,  Boolean function $f$ corresponding to DJ problem is divided into $2^t$ subfunctions $f_w:\{0,1\}^{n-t}\rightarrow\{0,1\}$ as follows.
For all $ u \in \{0,1\}^{n-t}$, let
\begin{equation}\label{General method of function decomposition}
f_w(u)=f(uw),
\end{equation}
where $w\in\{0,1\}^t$. 


Suppose there are $2^t$ people, each of whom has an oracle $O_{f_w}$ that can query all $f_w(u)=f(uw)$ for all $u \in \{0,1\}^{n-t}$, $w \in \{0,1\}^t$, 
where the oracle $O_{f_w}$ is defined as:
\begin{equation}\label{Ofw}
O_{f_w}\ket{u}\ket{b}=\ket{u}\ket{b\oplus f_w(u)},
\end{equation}
where $ w\in \{0,1\}^t$, $u\in\{0,1\}^{n-t}$, $b\in\{0,1\}$.

Each person can access $2^{n-t}$ values of $f$. 
They need to determine whether $f$ is constant or balanced by querying their own oracle and communicating with each other as few times as possible.


\subsection{Distributed  DJ  algorithm with errors}

In 2021, Avron et al \cite{avron_quantum_2021} proposed a distributed DJ  algorithm. However, their algorithm only considers the case of  two computing nodes, and  it directly runs DJ algorithm for each node (i.e., subfunction) without using quantum communication between nodes. In this way, the result of measurement in their algorithm is not exact and has error. 

In the following we review the algorithm in \cite{avron_quantum_2021}  and give its error analysis.

Given a Boolean function $f:\{0,1\}^n\rightarrow\{0,1\}$ of  DJ problem,  and it is decomposed  into two subfunctions $f_0:\{0,1\}^{n-1}\rightarrow\{0,1\}$ and $f_1:\{0,1\}^{n-1}\rightarrow\{0,1\}$. The algorithm in \cite{avron_quantum_2021} runs the DJ algorithm directly for $f_0$ and $f_1$ respectively, obtaining the measuring  results denoted as $M_0$ and $M_1$ (if $M_0=M_1=0^{n-1}$, then $f$ is concluded to be constant), respectively. Let $k_0=|\{u\in\{0,1\}^{n-1}|f_0(u)=1\}|$, $k_1=|\{u\in\{0,1\}^{n-1}|f_1(u)=1\}|$, $N=2^n$.

The  probability that the algorithm in \cite{avron_quantum_2021} misidentifies a balanced function as constant is
\begin{equation}\label{DDJA_error_2node_probability}
\begin{split}
&\Pr(f\ is\ \ balanced,M_0=M_1=0^{n-1})\\
=&\sum_{\substack{k_0+k_1=\frac{N}{2}\\ 0\leq k_0\leq\frac{N}{2}\\0\leq k_1\leq\frac{N}{2}  }}\dfrac{\dbinom{N/2}{k_0}\dbinom{N/2}{k_1}}{\dbinom{N}{N/2}}\left(\dfrac{2^{2}}{N}\right)^2\left(k_0-\dfrac{N}{2^{2}}\right)^2\left(\dfrac{2^{2}}{N}\right)^2\left(k_1-\dfrac{N}{2^{2}}\right)^2\\
>&0.
\end{split}
\end{equation}

From equation (\ref{DDJA_error_2node_probability}), it is clear that there is a situation where the algorithm in \cite{avron_quantum_2021} has errors. In effect, the algorithm in \cite{avron_quantum_2021} does not consider exploiting the essential structure of  DJ problem.
In \ref{Distributed DJ algorithm for multiple computing nodes with errors}, we generalize the algorithm in \cite{avron_quantum_2021} to solve the case of multiple subfunctions and analyze its errors. 

\section{Characterizations of DJ problem in distributed scenario} \label{sec:Characterization of the structure of DJ problem in distributed scenario}


In this section, we characterize the essential structure and features of  DJ problem in  distributed scenario.
 Intuitively, the structure of  DJ problem in distributed scenario can be represented from its corresponding structure table, and we construct   related examples in \ref{Examples of the structure of  DJ problem in  distributed scenario}.

In the interest of simplicity, we give  a number of  notations.
Suppose Boolean function $f:\{0,1\}^n \rightarrow \{0,1\}$, $f_0:\{0,1\}^{n-1} \rightarrow \{0,1\}$, $f_1:\{0,1\}^{n-1} \rightarrow \{0,1\}$,  where $f_0(u)=f(u0)$, $f_1(u)=f(u1)$,  $u\in\{0,1\}^{n-1}$, denote by 

\begin{align}
C_{00}=&|\{u|f_0(u)=0\}|.\\
C_{01}=&|\{u|f_0(u)=1\}|.\\
C_{10}=&|\{u|f_1(u)=0\}|.\\
C_{11}=&|\{u|f_1(u)=1\}|.\\
B_{00}=&|\{u|f_0(u)=0, f_1(u)=0\}|.\\
B_{01}=&|\{u|f_0(u)=0, f_1(u)=1\}|.\\
B_{10}=&|\{u|f_0(u)=1, f_1(u)=0\}|.\\
B_{11}=&|\{u|f_0(u)=1, f_1(u)=1\}|.\\
 M=&|\{u|f_0(u)\oplus f_1(u)=0\}|.
\end{align}

It is clear that 
\begin{equation}\label{M=B_0+B_1}
M=B_{00}+B_{11}
\end{equation}
and $0\leq M\leq 2^{n-1}$.


Theorem \ref{The4} below provides a sufficient and necessary condition for determining whether a given Boolean function $f$ of  DJ problem is constant or balanced.

\begin{theorem}\label{The4} Suppose Boolean  function $f:\{0,1\}^n \rightarrow \{0,1\}$, satisfies that it is either constant or balanced, and it is divided into subfunctions $f_0$ and $f_1$. Then:
\begin{enumerate}[(1)]
\item $f$ is constant  if and only if $C_{00}=C_{10}=2^{n-1}$
 or $C_{01}=C_{11}=2^{n-1}$;
\item $f$ is balanced if and only if $B_{00}=B_{11}=M/2$. 
 \end{enumerate}
\end{theorem}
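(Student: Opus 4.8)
The plan is to prove both equivalences directly from the definitions of the counting quantities, using the promise that $f$ is either constant or balanced together with the basic identity $|\{x \in \{0,1\}^n \mid f(x)=1\}| = C_{01}+C_{11}$ (and similarly $C_{00}+C_{10}$ counts the zeros of $f$). The key observation is that $f$ restricted to inputs ending in $0$ accounts for $C_{00}+C_{01}=2^{n-1}$ of the domain, and likewise $C_{10}+C_{11}=2^{n-1}$, so the full zero-count of $f$ is $C_{00}+C_{10}$ and the full one-count is $C_{01}+C_{11}$.

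**Part (1): the constant characterization.**

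For the sufficiency direction ($\Longleftarrow$) I would argue contrapositively-style from the promise: if $C_{00}=C_{10}=2^{n-1}$, then every $u$ gives $f(u0)=f(u1)=0$, so $f\equiv 0$; if $C_{01}=C_{11}=2^{n-1}$, then $f\equiv 1$; either way $f$ is constant. For necessity ($\Longrightarrow$), I would invoke the definition of constant: $f\equiv 0$ forces $C_{00}=C_{10}=2^{n-1}$, and $f\equiv 1$ forces $C_{01}=C_{11}=2^{n-1}$. This part is essentially bookkeeping and should be short; it is really just Theorem \ref{The5} restated in the new notation, so I could even cite that theorem directly.

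**Part (2): the balanced characterization.**

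This is the substantive part. For sufficiency, suppose $B_{00}=B_{11}=M/2$. The cleanest route is contradiction against the promise: if instead $f$ were constant, then by Part (1) either $f\equiv 0$ (forcing $B_{00}=2^{n-1}$, $B_{11}=0$) or $f\equiv 1$ (forcing $B_{11}=2^{n-1}$, $B_{00}=0$), and neither is compatible with $B_{00}=B_{11}$; hence $f$ must be balanced. For necessity, assume $f$ is balanced, so $C_{00}+C_{10}=2^{n-1}$. I would introduce the auxiliary quantities $S_0 := C_{00}-B_{01}$ and $S_1 := C_{10}-B_{10}$, and observe that both equal $B_{00}$ (since $C_{00}=B_{00}+B_{01}$ and $C_{10}=B_{00}+B_{10}$). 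Adding them gives $2B_{00}=S_0+S_1=(C_{00}+C_{10})-(B_{01}+B_{10})$, and using $B_{01}+B_{10}=2^{n-1}-M$ together with $C_{00}+C_{10}=2^{n-1}$ yields $2B_{00}=M$, i.e. $B_{00}=M/2$. Then $B_{11}=M-B_{00}=M/2$ from equation (\ref{M=B_0+B_1}). The main obstacle is purely combinatorial: keeping the four $B$-quantities and their relations to the $C$-quantities straight, and correctly identifying $B_{01}+B_{10}=2^{n-1}-M$ as the count of disagreeing inputs. Once the symmetric pair $S_0=S_1=B_{00}$ is set up, the arithmetic collapses cleanly, so I expect no deep difficulty beyond careful accounting.
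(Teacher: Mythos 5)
Your proposal is correct and follows essentially the same route as the paper's proof: part (1) by direct counting of $|\{x\mid f(x)=1\}|$, and part (2) by contradiction with the promise for sufficiency and by the auxiliary quantities $S_0=C_{00}-B_{01}$, $S_1=C_{10}-B_{10}$ (both equal to $B_{00}$) combined with $B_{01}+B_{10}=2^{n-1}-M$ and $C_{00}+C_{10}=2^{n-1}$ for necessity. The only cosmetic note is that the Theorem you propose to cite for part (1) is commented out of the paper, but since you supply the short argument yourself this does not matter.
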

\begin{proof}

Firstly, we prove that $f$ is constant  if and only if $C_{00}=C_{10}=2^{n-1}$
 or $C_{01}=C_{11}=2^{n-1}$. 

(\romannumeral1) $\Longleftarrow$. If $C_{00}=C_{10}=2^{n-1}$
 or $C_{01}=C_{11}=2^{n-1}$,
then 
\begin{equation}
|\{x|f(x)=1\}|=0
\end{equation}
or 
\begin{equation}
|\{x|f(x)=1\}|=2^n,
\end{equation}
that is $f(x) \equiv 0$ or $f(x) \equiv 1$. Therefore, $f$ is constant.

(\romannumeral2) $\Longrightarrow$. If $f$ is constant, then  $f(x) \equiv 0$ or $f(x) \equiv 1$. 

So
\begin{equation}
 |\{x|f(x)=1\}|=0
\end{equation}
or
\begin{equation}
 |\{x|f(x)=1\}|=2^n. 
\end{equation}

Therefore 
\begin{equation}
C_{00}=C_{10}=2^{n-1}
\end{equation}
or 
\begin{equation}
C_{01}=C_{11}=2^{n-1}.
\end{equation}

Secondly, we prove that $f$ is balanced if and only if $B_0=B_1=M/2$.

(\romannumeral3) 
$\Longleftarrow$. 
Suppose $B_{00}=B_{11}=M/2$ $(0\leq M\leq 2^{n-1})$, then $f$ is constant. 

If $f$ is constant, then $f\equiv 0$ or $f\equiv 1$. 
So $B_{00}=2^{n-1}$ or $B_{11}=2^{n-1}$, which is contrary to the assumption $B_{00}=B_{11}=M/2$ $(0\leq M\leq 2^{n-1})$. 

Therefore, if $B_{00}=B_{11}=M/2$, then $f$ is balanced.

(\romannumeral4) $\Longrightarrow$. 
Since 
\begin{equation}\label{thm4equ6}
\begin{split}
&B_{00}
+B_{11}
+B_{01}+B_{10}\\
=&M+B_{01}+B_{10}\\
=&2^{n-1},
 \end{split}
\end{equation}
 we have 
\begin{equation}\label{thm4equ3}
B_{01}+B_{10}=2^{n-1}-M. 
\end{equation}


Denote
 \begin{equation}\label{S_0_eq}
S_0=C_{00}-B_{01},
\end{equation}
\begin{equation}\label{S_1_eq}
S_1=C_{10}-B_{10}.
\end{equation}

Then 
\begin{equation}\label{thm4equ4}
S_0=S_1
=B_{00}.
\end{equation}

According to equation (\ref{thm4equ3}),equation (\ref{S_0_eq}) and equation (\ref{S_1_eq}), we have
\begin{equation}\label{thm2_eq_S0+S1}
\begin{split}
S_0+S_1
=&C_{00}+C_{10}-(B_{01}+B_{10})\\
=&C_{00}+C_{10}-\left(2^{n-1}-M\right).
\end{split}
\end{equation}

If $f$ is balanced, 
then
\begin{equation}\label{thm2_eq_S0+S1_f(x)=0}
C_{00}+C_{10}=2^{n-1}. 
\end{equation}

Therefore, by equation (\ref{thm2_eq_S0+S1}) and equation (\ref{thm2_eq_S0+S1_f(x)=0}), we have
\begin{equation}\label{thm4equ5}
\begin{split}
S_0+S_1
=&2^{n-1}-\left(2^{n-1}-M\right)\\
=&M.
\end{split}
\end{equation}

According to equation (\ref{thm4equ4}) and equation (\ref{thm4equ5}), we have
\begin{equation}\label{thm4equ1}
S_0=S_1
=B_{00}
=M/2. 
\end{equation}


With equation  (\ref{M=B_0+B_1}) and equation (\ref{thm4equ1}), we have
\begin{equation}\label{thm4equ2}
\begin{split}
B_{11}=&M-B_{00}\\
=&M-M/2\\
=&M/2.
\end{split}
\end{equation}

Combining equation (\ref{thm4equ1}) with equation (\ref{thm4equ2}), we have $B_{00}=B_{11}=M/2$.






\end{proof}

In light of Theorem \ref{The4},  we have   Corollary \ref{Cor3} below, which provides a  sufficient  condition for determining whether a given Boolean function $f$ of  DJ problem is balanced by  the xor value of the subfunctions $f_0$ and $f_1$.


\begin{corollary}\label{Cor3} Suppose Boolean function $f:\{0,1\}^n \rightarrow \{0,1\}$, satisfies that it is either constant or balanced, and it is divided into subfunctions $f_0$ and $f_1$. If $\exists$ $u\in\{0,1\}^{n-1}$ such that $f_0(u)\oplus f_1(u)=1$, then $f$ is balanced.
\end{corollary}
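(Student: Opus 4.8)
The plan is to argue by contradiction, using the promise that $f$ is either constant or balanced together with part~(1) of Theorem~\ref{The4}. First I would take the hypothesis that some $u \in \{0,1\}^{n-1}$ satisfies $f_0(u) \oplus f_1(u) = 1$, and suppose toward a contradiction that $f$ is \emph{not} balanced; by the promise, $f$ must then be constant.

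Next I would invoke part~(1) of Theorem~\ref{The4}, which states that $f$ constant is equivalent to $C_{00} = C_{10} = 2^{n-1}$ or $C_{01} = C_{11} = 2^{n-1}$. The key observation is that the domain $\{0,1\}^{n-1}$ has exactly $2^{n-1}$ elements, so a count of $2^{n-1}$ saturates the whole input space. Thus $C_{00} = 2^{n-1}$ forces $f_0(u) = 0$ for \emph{every} $u$, and $C_{10} = 2^{n-1}$ forces $f_1 \equiv 0$; symmetrically, $C_{01} = C_{11} = 2^{n-1}$ forces $f_0 \equiv 1$ and $f_1 \equiv 1$. In either case $f_0(u) = f_1(u)$ for all $u \in \{0,1\}^{n-1}$, hence $f_0(u) \oplus f_1(u) = 0$ throughout.

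This directly contradicts the hypothesis that some $u$ gives $f_0(u) \oplus f_1(u) = 1$, so $f$ cannot be constant, and the promise then forces $f$ to be balanced. I expect no genuine obstacle here: the argument is essentially a one-line contrapositive once Theorem~\ref{The4}(1) is available. The only point deserving attention is precisely the saturation remark above, namely that a cardinality of $2^{n-1}$ pins down the corresponding subfunction completely rather than leaving freedom on some inputs; this is what converts the counting characterization of constancy into the pointwise statement $f_0 \equiv f_1$ required to derive the contradiction.
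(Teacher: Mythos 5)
Your proposal is correct and follows exactly the route the paper intends: the corollary is presented as a consequence of Theorem \ref{The4}, and your contrapositive argument --- constancy forces $C_{00}=C_{10}=2^{n-1}$ or $C_{01}=C_{11}=2^{n-1}$, which saturates the domain and makes $f_0\oplus f_1\equiv 0$, contradicting the hypothesis --- is the intended derivation. The saturation remark you flag is indeed the only step that needs saying, and you say it correctly.
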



In the following, we  describe 
Theorem \ref{The2}, which characterizes the basic structure of  DJ problem in  distributed scenario with multiple subfunctions. Also, we give some  notations in order to describe its procedure of proof more clearly.


Suppose Boolean function $f:\{0,1\}^n \rightarrow \{0,1\}$, $f_w:\{0,1\}^{n-t} \rightarrow \{0,1\}$, where $f_w(u)=f(uw)$, $u\in\{0,1\}^{n-t}$, $w\in\{0,1\}^t$. For all $u \in \{0,1\}^{n-t}$, denote
\begin{equation}\label{deltadef}
 \delta(u) = |\{w|f_w(u)=0\}|-|\{w|f_w(u)=1\}|.
 \end{equation}
 

According to  equation (\ref{deltadef}) 
, we have
\begin{equation}\label{delta1}
\begin{split}
\delta(u)=&|\{w|f_w(u)=0\}|-|\{w|f_w(u)=1\}|\\
  =&2^t-|\{w|f_w(u)=1\}|-|\{w|f_w(u)=1\}|\\
  =&2^t-2|\{w|f_w(u)=1\}|\\ 
  =&2^t-2\sum_{w\in\{0,1\}^t}f_w(u).
\end{split}
\end{equation}

  Theorem \ref{The2} below provides a sufficient and necessary condition for determining whether a given Boolean function $f$ of  DJ problem is constant or balanced by means of $\delta(u)$.


\begin{theorem}\label{The2} Suppose Boolean function $f:\{0,1\}^n \rightarrow \{0,1\}$, satisfies that it is either constant or balanced, and it is divided into $2^t$ subfunctions $f_w$ $(\forall u \in \{0,1\}^{n-t}, w \in \{0,1\}^{t}, f_w(u)=f(uw))$. Then: 
\begin{enumerate}[(1)]
\item $f$ is constant  if and only if for all $u \in \{0,1\}^{n-t}$, $\delta(u) =2^t$ or for all $u \in \{0,1\}^{n-t}$, $\delta(u) =-2^t$; 
\item $f$ is balanced if and only if $\sum\nolimits_{u\in\{0,1\}^{n-t}} \delta(u) = 0$. 
\end{enumerate}
\end{theorem}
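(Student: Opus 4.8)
The plan is to derive both equivalences directly from the identity
\[
\delta(u) = 2^t - 2\sum_{w\in\{0,1\}^t}f_w(u)
\]
recorded in equation (\ref{delta1}), which turns every statement about $\delta(u)$ into a statement about $\sum_{w}f_w(u)$, the number of ones among the subfunction values at the fixed point $u$.

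For part (1) I would treat the two directions separately. In the forward direction, if $f$ is constant then $f\equiv 0$ or $f\equiv 1$; the first case gives $\sum_{w}f_w(u)=0$ and hence $\delta(u)=2^t$ for every $u$, while the second gives $\sum_{w}f_w(u)=2^t$ and hence $\delta(u)=-2^t$ for every $u$. In the backward direction I would read the identity in reverse: $\delta(u)=2^t$ forces $\sum_{w}f_w(u)=0$, and since each $f_w(u)\in\{0,1\}$ this pins every $f_w(u)$ to $0$; ranging $u$ over $\{0,1\}^{n-t}$ yields $f\equiv 0$. Symmetrically, $\delta(u)=-2^t$ for all $u$ forces $\sum_{w}f_w(u)=2^t$, so all $2^t$ terms equal $1$ and $f\equiv 1$. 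Either way $f$ is constant.

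For part (2) the key step is to sum the identity over all $u$ and reindex the resulting double sum, using the fact that the pairs $(u,w)$ with $u\in\{0,1\}^{n-t}$ and $w\in\{0,1\}^t$ enumerate exactly the strings $x=uw\in\{0,1\}^n$:
\[
\sum_{u\in\{0,1\}^{n-t}}\delta(u) = 2^n - 2\sum_{x\in\{0,1\}^n}f(x).
\]
Thus $\sum_{u}\delta(u)=0$ holds if and only if $\sum_{x}f(x)=2^{n-1}$, i.e. $|\{x\,|\,f(x)=1\}|=2^{n-1}$, which is precisely the definition of $f$ being balanced.

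No step constitutes a genuine obstacle, since the argument is essentially bookkeeping built on equation (\ref{delta1}). The only two points that need a moment of care are the reindexing $\sum_{u}\sum_{w}f_w(u)=\sum_{x}f(x)$ in part (2), and the observation in the backward direction of part (1) that, because each $f_w(u)$ is a $0/1$ value, an extremal sum of $0$ or $2^t$ forces every individual term; consequently an extremal value of $\delta(u)$ at even a single $u$ already determines $f$ on the entire block $\{uw : w\in\{0,1\}^t\}$.
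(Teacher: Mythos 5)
Your proof is correct and follows essentially the same route as the paper's: both arguments reduce everything to the identity $\delta(u)=2^t-2\sum_{w}f_w(u)$, handle part (1) by noting that an extremal value of $\delta(u)$ pins every $0/1$ term $f_w(u)$, and handle part (2) by summing over $u$ and reindexing the pairs $(u,w)$ as strings $x=uw$. No gaps; nothing further to add.
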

\begin{proof}
Firstly, we prove that $f$ is constant if and only if for all $u \in \{0,1\}^{n-t}$, $\delta(u) =2^t$ or for all $u \in \{0,1\}^{n-t}$, $\delta(u) =-2^t$.

(\romannumeral1) $\Longleftarrow$. If  for all $u \in \{0,1\}^{n-t}$, $\delta(u) =2^t$, then according to equation (\ref{delta1}), for all $u \in \{0,1\}^{n-t}$, we have 
\begin{equation}
\sum_{w\in\{0,1\}^t}f_w(u)=0. 
\end{equation}

So for all $x \in \{0,1\}^{n}$, $f(x)=0$, that is $f(x) \equiv 0$.

 If  for all $u \in \{0,1\}^{n-t}$, $\delta(u) =-2^t$, then according to equation (\ref{delta1}), for all $u \in \{0,1\}^{n-t}$, we have 
 \begin{equation}
 \sum_{w\in\{0,1\}^t}f_w(u)=2^t. 
 \end{equation}
 
 So for all $x \in \{0,1\}^{n}$, $f(x)=1$, that is $f(x) \equiv 1$.
 
 Therefore, if for all $u \in \{0,1\}^{n-t}$, $\delta(u) =2^t$ or for all $u \in \{0,1\}^{n-t}$, $\delta(u) =-2^t$, then $f$ is constant.

(\romannumeral2) $\Longrightarrow$. If $f$ is constant, then we have $f(x) \equiv 0$ or $f(x) \equiv 1$.

If $f(x) \equiv 0$, then for all $x \in \{0,1\}^{n}$, $f(x)=0$. So for all $u \in \{0,1\}^{n-t}$, we have 
\begin{equation}
\sum_{w\in\{0,1\}^t}f_w(u)=0. 
 \end{equation}
 
 According to equation (\ref{delta1}), for all $u \in \{0,1\}^{n-t}$, we have $\delta(u) =2^t$.

If $f(x) \equiv 1$, then for all $x \in \{0,1\}^{n}$, $f(x)=1$. So for all $u \in \{0,1\}^{n-t}$, we have 
\begin{equation}
\sum_{w\in\{0,1\}^t}f_w(u)=2^t.  
 \end{equation}
 
According to equation (\ref{delta1}), for all $u \in \{0,1\}^{n-t}$, we have $\delta(u) =-2^t$.

 Therefore, if $f$ is constant, then for all $u \in \{0,1\}^{n-t}$, $\delta(u) =2^t$ or for all $u \in \{0,1\}^{n-t}$, $\delta(u) =-2^t$.

Secondly, we prove that $f$ is balanced if and only if $\sum\nolimits_{u\in\{0,1\}^{n-t}} \delta(u) = 0$.

(\romannumeral3) $\Longleftarrow$. If $\sum\nolimits_{u\in\{0,1\}^{n-t}} \delta(u)=0$, according to equation $(\ref{delta1})$, then we have 
\begin{equation}
\begin{split}
\sum\limits_{u\in\{0,1\}^{n-t}} \delta(u)
=&\sum\limits_{u\in\{0,1\}^{n-t}} \left(2^t-2\sum\limits_{w\in\{0,1\}^t}f_w(u)\right)\\
=&\sum\limits_{u\in\{0,1\}^{n-t}}2^t-2\sum\limits_{\substack{u\in\{0,1\}^{n-t}\\w\in\{0,1\}^{t}}}f_w(u)\\
=&2^n-2\sum\limits_{x\in\{0,1\}^{n}}f(x)
=0, 
\end{split}
\end{equation}
that is $\sum\nolimits_{x\in\{0,1\}^{n}}f(x)=2^{n-1}$.
So we have 
\begin{equation}
|\{x|f(x)=1\}|
=\sum\limits_{x\in\{0,1\}^{n}}f(x)
=2^{n-1}. 
\end{equation}

\begin{equation}
|\{x|f(x)=0\}|
=2^n-|\{x|f(x)=1\}|
=2^{n-1}.
\end{equation}

Therefore, $f$ is balanced.

(\romannumeral4) $\Longrightarrow$. If $f$ is balanced, then  we have 
\begin{equation}\label{thm5_balanced_1}
|\{x|f(x)=0\}|=|\{x|f(x)=1\}|=2^{n-1}. 
\end{equation}

According to equation $(\ref{delta1})$ and equation $(\ref{thm5_balanced_1})$,  we have 
\begin{equation}
\begin{split}
\sum\limits_{u\in\{0,1\}^{n-t}} \delta(u)
=&\sum\limits_{u\in\{0,1\}^{n-t}} \left(2^t-2\sum\limits_{w\in\{0,1\}^t}f_w(u)\right)\\
=&\sum\limits_{u\in\{0,1\}^{n-t}}2^t-2\sum\limits_{\substack{u\in\{0,1\}^{n-t}\\w\in\{0,1\}^{t}}}f_w(u)\\
=&2^n-2\sum\limits_{x\in\{0,1\}^{n}}f(x)\\
=&2^n-2|\{x|f(x)=1\}|
=0.
\end{split}
\end{equation}

\end{proof}

In light of Theorem \ref{The2},  we have   Corollary \ref{Cor1} below, which provides a  sufficient  condition for determining whether a given Boolean function $f$ of  DJ problem is balanced by the absolute value of $\delta(u)$.


\begin{corollary}\label{Cor1} Suppose Boolean function $f:\{0,1\}^n \rightarrow \{0,1\}$, satisfies that it is either constant or balanced, and it is divided into $2^t$ subfunctions $f_w$ $(\forall u \in \{0,1\}^{n-t}, w \in \{0,1\}^{t}, f_w(u)=f(uw))$. If $\exists$ $u\in\{0,1\}^{n-t}$ such that $|\delta(u)|\neq 2^t$, then $f$ is balanced. 
\end{corollary}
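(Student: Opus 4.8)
The plan is to derive the corollary immediately from part~(1) of Theorem~\ref{The2} by a contrapositive argument, exploiting the dichotomy guaranteed by the promise that $f$ is either constant or balanced. The key observation is that part~(1) of Theorem~\ref{The2} forces $|\delta(u)| = 2^t$ for \emph{every} $u$ whenever $f$ is constant, so a single witness $u$ with $|\delta(u)| \neq 2^t$ already rules out constancy.

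First I would assume, toward a contradiction, that $f$ is constant despite the existence of some $u_0 \in \{0,1\}^{n-t}$ with $|\delta(u_0)| \neq 2^t$. By part~(1) of Theorem~\ref{The2}, $f$ being constant implies that either $\delta(u) = 2^t$ for all $u \in \{0,1\}^{n-t}$, or $\delta(u) = -2^t$ for all such $u$. In either branch one obtains $|\delta(u)| = 2^t$ for all $u$, and in particular $|\delta(u_0)| = 2^t$, contradicting the hypothesis on $u_0$.

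Hence $f$ cannot be constant. Since the promise guarantees that $f$ is either constant or balanced, I would then conclude that $f$ is balanced, which completes the argument.

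I do not anticipate any genuine obstacle here, as the entire content is already packaged in Theorem~\ref{The2}(1); the only point worth stating explicitly is that the promise is exactly what lets us upgrade ``$f$ is not constant'' to ``$f$ is balanced.'' If a reader preferred a self-contained argument bypassing Theorem~\ref{The2}, an alternative is to read off from equation~(\ref{delta1}) that $\delta(u) = \pm 2^t$ holds precisely when $\sum_{w\in\{0,1\}^t} f_w(u) \in \{0, 2^t\}$, i.e.\ when the $2^t$ values $\{f_w(u)\}_{w}$ are all equal; a witness $u_0$ with $|\delta(u_0)| \neq 2^t$ then forces indices $w, w'$ with $f_w(u_0) \neq f_{w'}(u_0)$, which again precludes $f$ from being constant and, by the promise, yields that $f$ is balanced.
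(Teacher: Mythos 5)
Your proposal is correct and matches the paper's own argument: the paper likewise derives the corollary from part (1) of Theorem \ref{The2}, noting that a single witness $u$ with $|\delta(u)|\neq 2^t$ falsifies both alternatives ``$\delta(u)=2^t$ for all $u$'' and ``$\delta(u)=-2^t$ for all $u$,'' so $f$ is not constant and the promise forces $f$ to be balanced. Your alternative self-contained argument via equation (\ref{delta1}) is also sound but is not needed.
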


By combining Theorem \ref{The4} with Theorem \ref{The2}, we can obtain Theorem 
\ref{The7} in the following. Also, we need some notations for convenience.

Suppose Boolean function $f:\{0,1\}^n \rightarrow \{0,1\}$, $f_{w'0}:\{0,1\}^{n-t} \rightarrow \{0,1\}$, $f_{w'1}:\{0,1\}^{n-t} \rightarrow \{0,1\}$, where $f_{w'0}(u)=f(uw'0)$, $f_{w'1}(u)=f(uw'1)$, $u\in\{0,1\}^{n-t}$, $w'\in\{0,1\}^{t-1}$. For all $u \in \{0,1\}^{n-t}$, denote
\begin{align}
E_{00}(u)=&|\{w'|f_{w'0}(u)=0,f_{w'1}(u)=0\}|.\\
E_{01}(u)=&|\{w'|f_{w'0}(u)=0,f_{w'1}(u)=1\}|.\\
E_{10}(u)=&|\{w'|f_{w'0}(u)=1,f_{w'1}(u)=0\}|.\\
E_{11}(u)=&|\{w'|f_{w'0}(u)=1,f_{w'1}(u)=1\}|.\\
\Delta(u)=&E_{00}(u)-E_{11}(u).\\
K(u)=&|\{w'|f_{w'0}(u)\oplus f_{w'1}(u)=1\}|.
\end{align}

It is clear that 
\begin{equation}\label{K=E_01+E_10}
K(u)=E_{01}(u)+E_{10}(u).
\end{equation}

According to the definition of $\Delta(u)$, for all $u \in \{0,1\}^{n-t}$, we have
\begin{equation}\label{Delta1}
\begin{split}
  \Delta(u)&=E_{00}(u)-E_{11}(u)\\
  &=\left[2^{t-1}-E_{11}(u)-K(u)\right]-E_{11}(u)\\
  &=2^{t-1}-K(u)-2E_{11}(u)\\
  &=2^{t-1}-\sum_{w'\in\{0,1\}^{t-1}}f_{w'0}(u)\oplus f_{w'1}(u)-2\sum_{w'\in\{0,1\}^{t-1}}f_{w'0}(u)\land f_{w'1}(u).
\end{split}
\end{equation}
and
\begin{equation}\label{Delta2}
\begin{split}
  \Delta(u)&=E_{00}(u)-E_{11}(u)\\
  &=E_{00}(u)-\left[2^{t-1}-E_{00}(u)-K(u)\right]\\
  &=-2^{t-1}+K(u)+2E_{00}(u)\\
  &=-2^{t-1}+\sum_{w'\in\{0,1\}^{t-1}}f_{w'0}(u)\oplus f_{w'1}(u)+2\sum_{w'\in\{0,1\}^{t-1}}(\lnot f_{w'0}(u))\land (\lnot f_{w'1}(u)).
\end{split}
\end{equation}

Theorem \ref{The7} below provides a sufficient and necessary condition for determining whether a given Boolean function $f$ of  DJ problem is constant or balanced in the light of $\Delta(u)$.

\begin{theorem}\label{The7} Suppose Boolean function $f:\{0,1\}^n \rightarrow \{0,1\}$, satisfies that it is either constant or balanced, and it is divided into $2^t$ subfunctions $f_{w'0}$ and  $f_{w'1}$ $(\forall u \in \{0,1\}^{n-t}, w' \in \{0,1\}^{t-1}, f_{w'0}(u)=f(uw'0), f_{w'1}(u)=f(uw'1))$. 
Then:
\begin{enumerate}[(1)]
 \item $f$ is constant  if and only if for all $u \in \{0,1\}^{n-t}$, $\Delta(u) = 2^{t-1}$ or for all $u \in \{0,1\}^{n-t}$, $\Delta(u) = -2^{t-1}$;
 \item $f$ is balanced  if and only if $\sum_{u\in\{0,1\}^{n-t}}\Delta(u)=0$.
 \end{enumerate}
\end{theorem}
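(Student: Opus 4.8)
The plan is to reduce Theorem \ref{The7} entirely to Theorem \ref{The2} by establishing the single pointwise identity $\delta(u) = 2\Delta(u)$ for every $u \in \{0,1\}^{n-t}$; once this is in hand, both equivalences drop out by direct substitution into the two parts of Theorem \ref{The2}. The conceptual point is that the $2^t$ strings $w\in\{0,1\}^t$ indexing $\delta(u)$ partition into $2^{t-1}$ pairs $\{w'0,w'1\}$ according to their last bit, and these are exactly the pairs tallied by $E_{00}(u),E_{01}(u),E_{10}(u),E_{11}(u)$ in the definition of $\Delta(u)$. This last-bit pairing is the multi-node generalization of the $f_0/f_1$ splitting underlying Theorem \ref{The4}, which is why combining the two-subfunction and the multi-subfunction viewpoints is natural here.

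First I would establish the identity. Writing each $w$ as $w'b$ with $b\in\{0,1\}$ and counting over the $2^{t-1}$ pairs, a pair contributes to $|\{w|f_w(u)=0\}|$ a number of strings equal to how many of its two entries equal $0$: namely $2$ for an $E_{00}(u)$ pair, $1$ for an $E_{01}(u)$ or $E_{10}(u)$ pair, and $0$ for an $E_{11}(u)$ pair. Using equation (\ref{K=E_01+E_10}) this gives
\begin{equation}
|\{w|f_w(u)=0\}| = 2E_{00}(u) + K(u), \qquad |\{w|f_w(u)=1\}| = 2E_{11}(u) + K(u),
\end{equation}
and subtracting, together with (\ref{deltadef}), yields $\delta(u) = 2\left(E_{00}(u)-E_{11}(u)\right) = 2\Delta(u)$. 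Equivalently, the same identity falls out of comparing equation (\ref{delta1}) with equation (\ref{Delta1}) after noting $\sum_{w\in\{0,1\}^t} f_w(u) = K(u) + 2E_{11}(u)$.

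With $\delta(u)=2\Delta(u)$ fixed, part (1) is immediate: by Theorem \ref{The2}(1), $f$ is constant iff $\delta(u)=2^t$ for all $u$ or $\delta(u)=-2^t$ for all $u$, and dividing by $2$ turns these conditions into $\Delta(u)=2^{t-1}$ for all $u$ or $\Delta(u)=-2^{t-1}$ for all $u$, which is exactly the stated criterion. Likewise part (2) follows from Theorem \ref{The2}(2): $f$ is balanced iff $\sum_{u}\delta(u)=0$, and since $\sum_{u}\delta(u)=2\sum_{u}\Delta(u)$, this is equivalent to $\sum_{u\in\{0,1\}^{n-t}}\Delta(u)=0$.

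I do not expect a genuine obstacle: the only real content is the counting identity $\delta(u)=2\Delta(u)$, and the single place demanding care is the bookkeeping of the last-bit pairing, where the $2^t$ values summed in $\delta(u)$ collapse onto the $2^{t-1}$ pairs summed in $\Delta(u)$. One must track the resulting factor of $2$ faithfully so that the thresholds $\pm 2^t$ in Theorem \ref{The2} map precisely onto $\pm 2^{t-1}$ in Theorem \ref{The7}; everything beyond that is a direct appeal to the already-proved Theorem \ref{The2}.
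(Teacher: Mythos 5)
Your proof is correct, and it takes a genuinely different route from the paper's. Your single identity $\delta(u)=2\Delta(u)$ --- obtained by pairing the $2^t$ strings $w$ by their last bit, so that $|\{w\,|\,f_w(u)=0\}|=2E_{00}(u)+K(u)$ and $|\{w\,|\,f_w(u)=1\}|=2E_{11}(u)+K(u)$ --- is valid, and once it is in place both parts of Theorem \ref{The7} become literal restatements of the two parts of Theorem \ref{The2} after dividing by $2$. The paper instead proves Theorem \ref{The7} from scratch: part (1) is argued directly from equations (\ref{Delta1}) and (\ref{Delta2}) (if $\Delta(u)=2^{t-1}$ everywhere then the nonnegative quantities $K(u)$ and $E_{11}(u)$ must all vanish, forcing $f\equiv 0$, and symmetrically for $-2^{t-1}$ and $f\equiv 1$, with the converse computed directly); part (2) repeats, at the level of the sums $\sum_u E_{00}(u)$ and $\sum_u E_{11}(u)$, the $S_0,S_1$ double-counting scheme from the proof of Theorem \ref{The4}, introducing $F_0(u)$, $H_0(u)$, $T_0$, $T_1$ and $D$ to conclude $\sum_u E_{00}(u)=\sum_u E_{11}(u)=D/2$ when $f$ is balanced, and arguing the reverse implication by contradiction. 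Your reduction is shorter and makes the relationship between the two invariants explicit --- it shows Theorem \ref{The7} carries no information beyond Theorem \ref{The2}, since $\Delta$ is just $\delta/2$ --- whereas the paper's self-contained argument keeps the $E_{bc}(u)$ bookkeeping visible in the form actually used in the correctness analysis of Algorithm \ref{algorithm5}. The only step in your write-up that demands care is the factor of $2$ coming from the last-bit pairing, and you have tracked it correctly.
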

\begin{proof}
Firstly, we prove that $f$ is constant if and only if for all $u \in \{0,1\}^{n-t}$, $\Delta(u) = 2^{t-1}$ or for all $u \in \{0,1\}^{n-t}$, $\Delta(u) = -2^{t-1}$.

(\romannumeral1) $\Longleftarrow$. If  for all $u \in \{0,1\}^{n-t}$, $\Delta(u) =2^{t-1}$, then according to equation (\ref{Delta1}), for all $u \in \{0,1\}^{n-t}$, we have 
\begin{equation}
\sum_{w'\in\{0,1\}^{t-1}}f_{w'0}(u)\oplus f_{w'1}(u)=0
\end{equation}
and
\begin{equation}
\sum_{w'\in\{0,1\}^{t-1}}f_{w'0}(u)\land f_{w'1}(u)=0. 
\end{equation}

So for all $x \in \{0,1\}^{n}$, $f(x)=0$, that is $f(x) \equiv 0$.

 If  for all $u \in \{0,1\}^{n-t}$, $\Delta(u) =-2^{t-1}$, then by equation (\ref{Delta2}), for all $u \in \{0,1\}^{n-t}$, we have 
\begin{equation}
\sum_{w'\in\{0,1\}^{t-1}}f_{w'0}(u)\oplus f_{w'1}(u)=0
\end{equation}
 and
 \begin{equation}
 \sum_{w'\in\{0,1\}^{t-1}}(\lnot f_{w'0}(u))\land (\lnot f_{w'1}(u))=0. 
  \end{equation}
  
 So for all $x \in \{0,1\}^{n}$, $f(x)=1$, that is $f(x) \equiv 1$.
 
 Therefore, if for all $u \in \{0,1\}^{n-t}$, $\Delta(u) =2^{t-1}$ or for all $u \in \{0,1\}^{n-t}$, $\Delta(u) =-2^{t-1}$, then $f$ is constant.

(\romannumeral2) $\Longrightarrow$. If $f$ is constant, then we have $f(x) \equiv 0$ or $f(x) \equiv 1$.

If $f(x) \equiv 0$, then for all $x \in \{0,1\}^{n}$, $f(x)=0$. So for all $u \in \{0,1\}^{n-t}$, we have 
 \begin{equation}
\sum_{w'\in\{0,1\}^{t-1}}f_{w'0}(u)\oplus f_{w'1}(u)=0
\end{equation}
and
\begin{equation}
\sum_{w'\in\{0,1\}^{t-1}}f_{w'0}(u)\land f_{w'1}(u)=0.
\end{equation}

With equation (\ref{Delta1}), for all $u \in \{0,1\}^{n-t}$, we have $\Delta(u) =2^{t-1}$.

If $f(x) \equiv 1$, then for all $x \in \{0,1\}^{n}$, $f(x)=1$. So for all $u \in \{0,1\}^{n-t}$, we have 
\begin{equation}
\sum_{w'\in\{0,1\}^{t-1}}f_{w'0}(u)\oplus f_{w'1}(u)=0
\end{equation}
and
\begin{equation}
\sum_{w'\in\{0,1\}^{t-1}}(\lnot f_{w'0}(u))\land (\lnot f_{w'1}(u))=0. 
\end{equation}

By equation (\ref{Delta2}), for all $u \in \{0,1\}^{n-t}$, we have $\Delta(u) =-2^{t-1}$.

 Therefore, if $f$ is constant, then for all $u \in \{0,1\}^{n-t}$, $\Delta(u) =2^{t-1}$ or for all $u \in \{0,1\}^{n-t}$, $\Delta(u) =-2^{t-1}$.

Secondly, we prove that $f$ is balanced if and only if $\sum_{u\in\{0,1\}^{n-t}}\Delta(u)=0$.

(\romannumeral3) $\Longleftarrow$. Suppose $\sum_{u\in\{0,1\}^{n-t}}\Delta(u)=0$, but $f$ is constant. 
From $f$ being  constant, it follows that $f\equiv 0$ or $f\equiv 1$, that is 
\begin{equation}
\sum\limits_{u\in\{0,1\}^{n-t}}E_{00}(u)=2^{n-1} 
\end{equation}
or 
\begin{equation}
\sum\limits_{u\in\{0,1\}^{n-t}}E_{11}(u)=2^{n-1}. 
\end{equation}

Therefore, we have
\begin{equation}
\sum\limits_{u\in\{0,1\}^{n-t}}\Delta(u)=2^{n-1}
\end{equation}
or
\begin{equation}
\sum\limits_{u\in\{0,1\}^{n-t}}\Delta(u)=-2^{n-1},
\end{equation}
which is contrary to the assumption $\sum_{u\in\{0,1\}^{n-t}}\Delta(u)=0$. 

As a result, if $\sum_{u\in\{0,1\}^{n-t}}\Delta(u)=0$, then $f$ is balanced.

(\romannumeral4) $\Longrightarrow$. 
If $f$ is balanced, 
then 
$|\{x|f(x)=0\}|=|\{x|f(x)=1\}|=2^{n-1}$. 
For all $u \in \{0,1\}^{n-t}$,  denote
\begin{equation}
F_0(u)=|\{w'|f_{w'0}(u)=0\}|.
\end{equation}

For all $u \in \{0,1\}^{n-t}$, denote
\begin{equation}
H_0(u)=|\{w'|f_{w'1}(u)=0\}|.
\end{equation}

Then we have
\begin{equation}\label{lemma1eq5}
\begin{split}
&\sum\limits_{u\in\{0,1\}^{n-t}}[F_0(u)+H_0(u)]\\
=&|\{x|f(x)=0\}|\\
=&2^{n-1}.
\end{split}
\end{equation}

Let
\begin{equation}\label{D}
D=\sum\limits_{u\in\{0,1\}^{n-t}}[E_{00}(u)+E_{11}(u)].
\end{equation}
\begin{equation}
T_0=\sum\limits_{u\in\{0,1\}^{n-t}}[F_0(u)-E_{01}(u)].
\end{equation}
\begin{equation}
T_1=\sum\limits_{u\in\{0,1\}^{n-t}}[H_0(u)-E_{10}(u)].
\end{equation}

Then
\begin{equation}\label{T_0=T_1}
T_0=T_1=\sum\limits_{u\in\{0,1\}^{n-t}}E_{00}(u).
\end{equation}
\begin{equation}\label{T_0+T_1}
\begin{split}
T_0+T_1=&\sum\limits_{u\in\{0,1\}^{n-t}}[F_0(u)+H_0(u)]-\sum\limits_{u\in\{0,1\}^{n-t}}[E_{01}(u)+E_{10}(u)]\\
&=2^{n-1}-(2^{n-1}-D)\\
&=D.
\end{split}
\end{equation}

Combining equation (\ref{T_0=T_1}) with equation (\ref{T_0+T_1}), we have
\begin{equation}\label{E_00=D/2}
\sum\limits_{u\in\{0,1\}^{n-t}}E_{00}(u)=\frac{D}{2}.
\end{equation}

Due to equation (\ref{D}) and equation (\ref{E_00=D/2}), we have
\begin{equation}\label{E_11=D/2}
\sum\limits_{u\in\{0,1\}^{n-t}}E_{11}(u)=\frac{D}{2}.
\end{equation}

From equation (\ref{E_00=D/2}) and equation (\ref{E_11=D/2}), we can deduce
\begin{equation}\label{Delta=0}
\begin{split}
\sum\limits_{u\in\{0,1\}^{n-t}}\Delta(u)=&\sum\limits_{u\in\{0,1\}^{n-t}}[E_{00}(u)-E_{11}(u)]\\
=&\sum\limits_{u\in\{0,1\}^{n-t}}E_{00}(u)-\sum\limits_{u\in\{0,1\}^{n-t}}E_{11}(u)\\
=&0.
\end{split}
\end{equation}

\end{proof}


In light of Theorem \ref{The7},  we have   Corollary \ref{Cor6} below, which provides a  sufficient  condition for determining whether a given Boolean function $f$ of  DJ problem is balanced by the absolute value of $\Delta(u)$.

\begin{corollary}\label{Cor6} Suppose Boolean function $f:\{0,1\}^n \rightarrow \{0,1\}$, satisfies that it is either constant or balanced.
 If $\exists$ $u\in\{0,1\}^{n-t}$ such that $|\Delta(u)|\neq 2^{t-1}$, then $f$ is balanced. 
\end{corollary}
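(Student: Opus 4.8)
The plan is to derive this directly from part (1) of Theorem \ref{The7}, using the promise that $f$ is either constant or balanced, and to argue by eliminating the constant case. Since the corollary asserts only a sufficient condition, it is enough to show that the hypothesis forces $f$ to fall outside the constant characterization, after which the promise leaves balancedness as the only possibility. This exactly parallels the way Corollary \ref{Cor1} is obtained from Theorem \ref{The2}.

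First I would fix a witness: assume there exists $u_0 \in \{0,1\}^{n-t}$ with $|\Delta(u_0)| \neq 2^{t-1}$. The key observation is that $|\Delta(u_0)| \neq 2^{t-1}$ is precisely the simultaneous negation of $\Delta(u_0) = 2^{t-1}$ and $\Delta(u_0) = -2^{t-1}$. Consequently, the statement ``for all $u \in \{0,1\}^{n-t}$, $\Delta(u) = 2^{t-1}$'' fails already at $u = u_0$, and likewise ``for all $u \in \{0,1\}^{n-t}$, $\Delta(u) = -2^{t-1}$'' fails at $u = u_0$. By Theorem \ref{The7}(1), these two universally quantified conditions together characterize $f$ being constant, so their simultaneous failure yields that $f$ is \emph{not} constant. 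Invoking the promise that $f$ is either constant or balanced then gives the conclusion that $f$ is balanced.

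There is essentially no obstacle here, since all the substantive work is carried by Theorem \ref{The7}; the only point that deserves care is the logical step that a single bad input suffices to break a universally quantified condition, i.e. that $|\Delta(u_0)| \neq 2^{t-1}$ negates \emph{both} clauses of the constant characterization at once. I would note in passing that routing the argument through part (2) of Theorem \ref{The7} is not viable: a lone witness with $|\Delta(u_0)| \neq 2^{t-1}$ gives no control over $\sum_{u} \Delta(u)$, so the elimination argument through part (1) is the clean and correct path.
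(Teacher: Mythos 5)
Your proof is correct and follows exactly the route the paper intends: the corollary is derived from Theorem \ref{The7}(1) by observing that a single witness $u_0$ with $|\Delta(u_0)|\neq 2^{t-1}$ falsifies both universally quantified clauses of the constant characterization, so $f$ is not constant and the promise forces it to be balanced. This mirrors the paper's treatment of Corollary \ref{Cor1} via Theorem \ref{The2}, so no further comparison is needed.
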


\section{Design and analysis of Algorithm  \ref{algorithm3}} \label{sec:Distributed quantum algorithm for  DJ problem (Algorithm 1)}

  
  The  idea of our algorithms is  first based on the  essential characteristics of DJ problem in  distributed scenario, and is insights from the distributed Simon's quantum algorithm \cite{Tan2022DQCSimon}, by using   the specific unitary operator  and  quantum teleportation to combine the corresponding oracles of multiple subfunctions. Then, the joint information between multiple subfunctions is extracted by the specific controlled rotation operator based on the technique of HHL algorithm \cite{HHL_2009}, and the quantum state consistent with the structure of  DJ problem is obtained. 

\subsection{Design of  Algorithm \ref{algorithm3}}



In the following, we describe the unitary operators involved in Algorithm \ref{algorithm3}.

The operators $O_{f_0}$ and $O_{f_1}$ in  Algorithm \ref{algorithm3} are defined in equation (\ref{Of0}) and equation (\ref{Of1}), respectively. 
 
The operator $Z$ in Algorithm \ref{algorithm3} is the usual Pauli matrix $Z$.


\begin{figure}[H]
	\begin{minipage}{\linewidth}	
		\begin{algorithm}[H]	
			\caption{Distributed quantum algorithm for DJ problem (two distributed computing nodes)}
			\label{algorithm3}
			\begin{algorithmic}[1]
				\State $|\varphi_0\rangle = |0^{n-1}\rangle\ket{0}$;
				
				\State $|\varphi_1\rangle = \left(H^{\otimes n-1}\otimes I\right)|\psi_0\rangle$; 
				
				\State 
				$|\varphi_2\rangle=O_{f_0}|\varphi_1\rangle$;
				
				\State 
				$|\varphi_3\rangle=(I\otimes Z)|\varphi_2\rangle$;
				
				\State 
				$|\varphi_4\rangle=O_{f_1}|\varphi_3\rangle$;
				
				\State Measure the last qubit of $|\varphi_4\rangle$: if the result is 1, then output $f$ is balanced; if the result is 0, then denote the quantum state after measurement  as $\ket{\varphi_5}$;
				
				
				\State $|\varphi_6\rangle=\left(H^{\otimes n-1}\otimes I\right)|\varphi_5\rangle$;
				
				\State Measure the first $n-1$ qubits of $\ket{\varphi_6}$:  if the result is   $0^{n-1}$, then output $f$ is constant; if the result is not $0^{n-1}$, then output $f$ is balanced.
			\end{algorithmic}
		\end{algorithm}

		\end{minipage}
	\end{figure}

		\begin{figure}[H]
		\centering
		\includegraphics[width=6.3in]{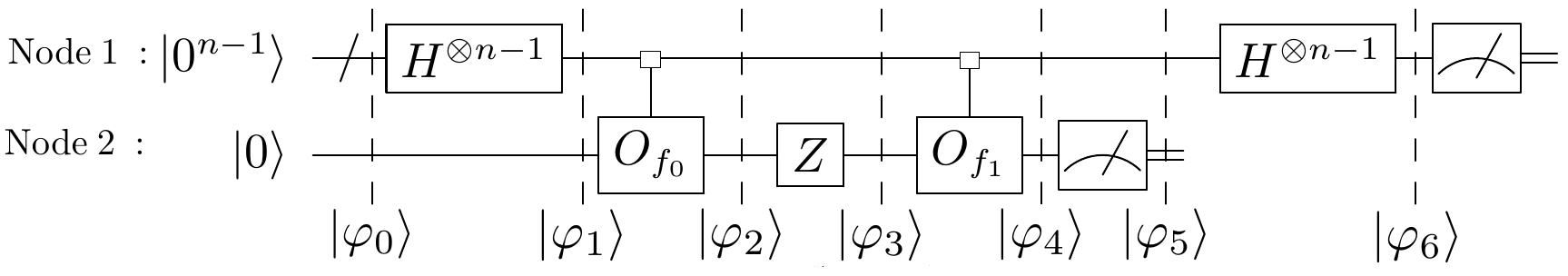}
		\caption{The circuit for the  distributed quantum algorithm for DJ problem (two computing nodes) (Algorithm \ref{algorithm3} ).}
		\label{algorithm_two_nodes (II)}
	\end{figure}


\subsection{The correctness analysis of  Algorithm \ref{algorithm3}}

In the following, we prove the correctness of  Algorithm \ref{algorithm3}. The state after the first step of Algorithm \ref{algorithm3} is:
\begin{align}
  |\varphi_1\rangle&=\frac{1}{\sqrt{2^{n-1}}}\sum_{u\in\{0,1\}^{n-1}}|u\rangle\ket{0}.
\end{align}

 Algorithm \ref{algorithm3} then queries the oracle $O_{f_{0}}$, resulting in the following state:
\begin{equation}
\begin{split}
  |\varphi_2\rangle&=O_{f_0}|\varphi_1\rangle\\
  &=\frac{1}{\sqrt{2^{n-1}}}\sum_{u\in\{0,1\}^{n-1}}|u\rangle|f_{0}(u)\rangle\\ 
\end{split}
\end{equation}

Then, the quantum gate $Z$ on  $|\varphi_2\rangle$ is applied to get the following states:
\begin{equation}
\begin{split}
  |\varphi_3\rangle=&(I\otimes Z)|\varphi_2\rangle\\
  =&\frac{1}{\sqrt{2^{n-1}}}\sum_{u\in\{0,1\}^{n-1}}(-1)^{f_0(u)}|u\rangle|f_0(u)\rangle.
\end{split}
\end{equation}

After applying the operator $O_{f_1}$ on $|\varphi_3\rangle$, we have the following state:
\begin{equation}
\begin{split}
  |\varphi_4\rangle=&O_{f_1}|\varphi_3\rangle\\
  =&\frac{1}{\sqrt{2^{n-1}}}\sum_{u\in\{0,1\}^{n-1}}(-1)^{f_0(u)}|u\rangle|f_0(u)\oplus f_1(u)\rangle\\
  =&\frac{1}{\sqrt{2^{n-1}}}\sum\limits_{\substack{u\in\{0,1\}^{n-1}\\ f_0(u)\oplus f_1(u)=0}}(-1)^{f_0(u)}|u\rangle\ket{0}+\frac{1}{\sqrt{2^{n-1}}}\sum\limits_{\substack{u\in\{0,1\}^{n-1}\\ f_0(u)\oplus f_1(u)=1}}(-1)^{f_0(u)}|u\rangle\ket{1}.
\end{split}
\end{equation}

After measuring on the last qubit of $|\varphi_4\rangle$, if the result is $1$, then there $\exists$ $u\in\{0,1\}^{n-1}$ such that $f_0(u)\oplus f_1(u)=1$. From Corollary \ref{Cor3}, we know that $f$ is balanced. If the result is $0$, then we get the state:
\begin{equation}
\begin{split}
|\varphi_5\rangle=&\frac{1}{\sqrt{M}}\sum\limits_{\substack{u\in\{0,1\}^{n-1}\\ f_0(u)\oplus f_1(u)=0}}(-1)^{f_0(u)}|u\rangle\ket{0}, \\
\end{split}
\end{equation}
where $M=|\{u\in\{0,1\}^{n-1}|f_0(u)\oplus f_1(u)=0\}|$.


After Hadamard transformation on the first $n-1$ qubits of $|\varphi_5\rangle$, we obtain the following state:
\begin{equation}
\begin{split}
	|\varphi_6\rangle=&\left(H^{\otimes n-1}\otimes I\right)|\varphi_5\rangle\\
	=&\frac{1}{\sqrt{2^{n-1}M}}\sum\limits_{\substack{u,z\in\{0,1\}^{n-1}\\ f_0(u)\oplus f_1(u)=0}}(-1)^{f_0(u)+u\cdot z}\ket{z}\ket{0}.
\end{split}	
\end{equation}


The probability of measuring the first $n-1$ qubits of  $|\varphi_6\rangle$ with the result of $0^{n-1}$ is
\begin{equation}
\left|\frac{1}{\sqrt{2^{n-1}M}}\sum\limits_{\substack{u\in\{0,1\}^{n-1}\\ f_0(u)\oplus f_1(u)=0}}(-1)^{f_0(u)}\right|^2.
\end{equation}

After measuring on the first $n-1$ qubits of  $|\varphi_6\rangle$, according to Theorem \ref{The4}, 
if the result is  $0^{n-1}$, then $f$ is  constant, otherwise $f$ is balanced.

\subsection{Comparison with other algorithms}


First, we compare  with the   distributed quantum algorithm for DJ problem proposed previously \cite{avron_quantum_2021}. 
For $t=1$, the algorithm in paper \cite{avron_quantum_2021} has certain error, but Algorithm \ref{algorithm3} can  solve it exactly.

Second, we compare  with distributed classical deterministic algorithm. Algorithm \ref{algorithm3} needs one query for each oracle  to solve DJ problem. However, distributed classical deterministic algorithm needs to query oracles $O(2^{n-1})$ times in the worst case. Therefore, Algorithm \ref{algorithm3} has the advantage of exponential acceleration compared with the   distributed classical deterministic algorithm.

Third, we compare with DJ algorithm. In DJ algorithm, the number of qubits required by the implementation circuit of oracle corresponding to Boolean function $f$ is $n+1$. In Algorithm \ref{algorithm3}, the number of qubits required by the implementation circuit of oracle corresponding to subfunctions $f_0$ and $f_1$ is $n$. 






\begin{remark}
However,  Algorithm \ref{algorithm3}  cannot exactly solve  DJ problem in  distributed scenario  with multiple computing nodes. In \ref{Distributed quantum algorithm for DJ problem with errors (four distributed computing nodes)}, we  give an example to demonstrate that the algorithm cannot exactly solve  DJ problem    for the case with four computing nodes.

\end{remark}

\section{Design and analysis of Algorithm  \ref{algorithm2}} \label{sec:Distributed quantum algorithm for  DJ problem (Algorithm 2)}




Since Algorithm  \ref{algorithm3} cannot exactly solve  DJ problem in  distributed scenario with multiple computing nodes, we use the new structural features of DJ problem  to design Algorithm \ref{algorithm2}. 


\subsection{Design of Algorithm \ref{algorithm2}}

Below we introduce related notation and  function that are used in Algorithm \ref{algorithm2}.

Let $[N]$ represent the set of integers $\{0,1,\cdots, 2^t-1\}$, ane let $BI:\{0,1\}^t \rightarrow [N]$ be the function to convert a binary string of $t$ bits to an equal decimal integer.

Operator $O'_{f_w}$ in Algorithm \ref{algorithm2}
 is defined as: 
\begin{equation}
O'_{f_w}\ket{u}\ket{b}\ket{c}=\ket{u}\ket{b}\ket{c\oplus f_w(u)},
\end{equation}
where $w\in \{0,1\}^t$, $u\in\{0,1\}^{n-t}$, $b\in\{0,1\}^{BI(w)}$ and $c\in \{0,1\}$.

Operator $U$ in Algorithm \ref{algorithm2} 
is defined as: 
\begin{equation}\label{U}
U\left(\bigotimes_{i\in\{0,1\}^{t}}\ket{a_i}\right)|b\rangle
=\left(\bigotimes_{i\in\{0,1\}^{t}}\ket{a_i}\right)\Ket{b\oplus \left(2^{t}-2\sum\limits_{i\in\{0,1\}^{t}}a_i\right)},
\end{equation}
where  $a_i\in \{0,1\}$, $i\in \{0,1\}^t$ and $b\in \{0,1\}^{t+2}$.
In fact,  $U$ is  unitary.


Another  operator $R$ in Algorithm \ref{algorithm2} 
is defined as: 
\begin{equation}\label{R}
	R\ket{d}\ket{e}=\ket{d}\left(\frac{d}{2^t}\ket{e}+(-1)^{e}\cdot\sqrt{1-\left(\frac{d}{2^t}\right)^2}\Ket{1\oplus e}\right),
\end{equation}
where  $ d \in \{0,1\}^{t+2}$ and $e\in\{0,1\}$.
Also,  $R$ is unitary.

\begin{lemma}\label{Lem_U} 
The  operator $U$  in equation (\ref{U})  and  the  operator $R$  in equation (\ref{R}) are both unitary.
\end{lemma}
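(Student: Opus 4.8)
The plan is to show that each operator maps the computational basis to an orthonormal set, which for a linear operator on a finite-dimensional Hilbert space is equivalent to unitarity. First I would treat $U$ defined in equation (\ref{U}). The crucial observation is that $U$ acts as a classical reversible map on basis states: it fixes the first $2^t$ single-qubit registers $\bigotimes_{i}\ket{a_i}$ and XORs a fixed value, namely the integer $2^t-2\sum_i a_i$ reduced appropriately into the $(t+2)$-qubit register, onto $\ket{b}$. I would first check that the value $2^t-2\sum_i a_i$ always lies in the range representable by $t+2$ qubits: since $0\le\sum_i a_i\le 2^t$, the quantity ranges over $[-2^t,2^t]$, and its magnitude is at most $2^t<2^{t+2}$, so interpreting the XOR target as a $(t+2)$-bit string (with a sign-or-two's-complement encoding) is well defined. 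Then I would argue that for fixed $\bigotimes_i\ket{a_i}$ the map $\ket{b}\mapsto\ket{b\oplus c}$ with $c$ a constant is a permutation of basis states (XOR by a fixed string is an involution, hence bijective), so $U$ permutes the full computational basis and therefore is unitary.

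Next I would handle $R$ from equation (\ref{R}). Here $R$ leaves the control register $\ket{d}$ untouched and, conditioned on each value of $d$, applies a single-qubit operation to $\ket{e}$. The plan is to fix $d$ and exhibit the $2\times 2$ matrix of the conditional action on the qubit: writing $s=d/2^t$, the map sends $\ket{0}\mapsto s\ket{0}+\sqrt{1-s^2}\,\ket{1}$ and $\ket{1}\mapsto s\ket{1}-\sqrt{1-s^2}\,\ket{0}$, where the sign $(-1)^e$ produces the antisymmetric off-diagonal entries. I would then write this matrix explicitly as
\begin{equation}
R_d=\begin{pmatrix} s & -\sqrt{1-s^2}\\ \sqrt{1-s^2} & s \end{pmatrix},
\end{equation}
recognize it as a real rotation matrix, and verify $R_d^\dagger R_d=I$ by direct computation, using $s^2+(1-s^2)=1$. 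For this to make sense I must confirm $0\le s\le 1$, i.e.\ $0\le d\le 2^t$, which holds because $d$ is the value $|\delta(u)|$ (or the analogous quantity) produced by $U$ and bounded by $2^t$; I would note this as the hypothesis under which $\sqrt{1-s^2}$ is real. Since $R$ is block-diagonal over the values of $d$ with each block a $2\times 2$ orthogonal (hence unitary) matrix, $R$ itself is unitary.

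The main obstacle, such as it is, is bookkeeping rather than conceptual difficulty: for $U$ I must be careful that the arithmetic target $2^t-2\sum_i a_i$ is genuinely encoded reversibly into the $(t+2)$-qubit register and that the XOR is with a well-defined fixed bit string, so that the claimed permutation really is a bijection on the whole $2^{t+2}$-dimensional register; and for $R$ the only subtlety is ensuring the rotation angle is real, i.e.\ that the relevant $d$ never exceeds $2^t$ in absolute value. I would therefore organize the proof as two short lemmas-within-the-proof: one establishing that $U$ permutes the computational basis (hence is unitary), and one establishing that $R$ is a controlled rotation whose every conditional block is an orthogonal $2\times2$ matrix (hence is unitary). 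Assembling these gives the statement of Lemma \ref{Lem_U} directly.
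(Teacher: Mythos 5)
Your proof is correct and follows essentially the same route as the paper's: both arguments come down to checking that $U$ and $R$ send the computational basis to an orthonormal set (you phrase this as a basis permutation for $U$ and as block-diagonal $2\times 2$ rotations for $R$, while the paper computes the inner products of images of distinct basis states directly and observes they vanish). Your additional bookkeeping remarks --- that the signed value $2^t-2\sum_i a_i$ must be reversibly encoded in the $(t+2)$-qubit register, and that $\sqrt{1-(d/2^t)^2}$ is real only when $|d|\le 2^t$ --- address points the paper leaves implicit and are worth keeping.
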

\begin{proof}
Suppose $\left(\bigotimes_{i\in\{0,1\}^{t}}\ket{a_i}\right)|b\rangle\neq  \left(\bigotimes_{i\in\{0,1\}^{t}}\ket{a'_i}\right)|b'\rangle$. Then 
\begin{equation}
\begin{split}
&\bra{b'}\left(\bigotimes_{i\in\{0,1\}^{t}}\bra{a'_i}\right)U^{\dagger}U\left(\bigotimes_{i\in\{0,1\}^{t}}\ket{a_i}\right)|b\rangle\\
=&\Bra{b'\oplus \left(2^{t}-2\sum\limits_{i\in\{0,1\}^{t}}a'_i\right)}\left(\bigotimes_{i\in\{0,1\}^{t}}\bra{a'_i}\right)\left(\bigotimes_{i\in\{0,1\}^{t}}\ket{a_i}\right)\Ket{b\oplus \left(2^{t}-2\sum\limits_{i\in\{0,1\}^{t}}a_i\right)}\\
=&0.
\end{split}
\end{equation}

Therefore, $U$ is unitary.

Suppose $\ket{d}\ket{e}\neq\ket{d'}\ket{e'}$. Then 
\begin{equation}
\begin{split}
&\bra{e'}\bra{d'}R^{\dagger}R\ket{d}\ket{e}\\
=&\left(\frac{d'}{2^t}\bra{e'}+(-1)^{e'}\cdot\sqrt{1-\left(\frac{d'}{2^t}\right)^2}\bra{1\oplus e'}\right)\braket{d'|d}\left(\frac{d}{2^t}\ket{e}+(-1)^{e}\cdot\sqrt{1-\left(\frac{d}{2^t}\right)^2}\Ket{1\oplus e}\right)\\
=&0.
\end{split}
\end{equation}

Therefore, $R$  is unitary.
\end{proof}

\begin{figure}[H]
  \centering
  \includegraphics[width=6.4in]{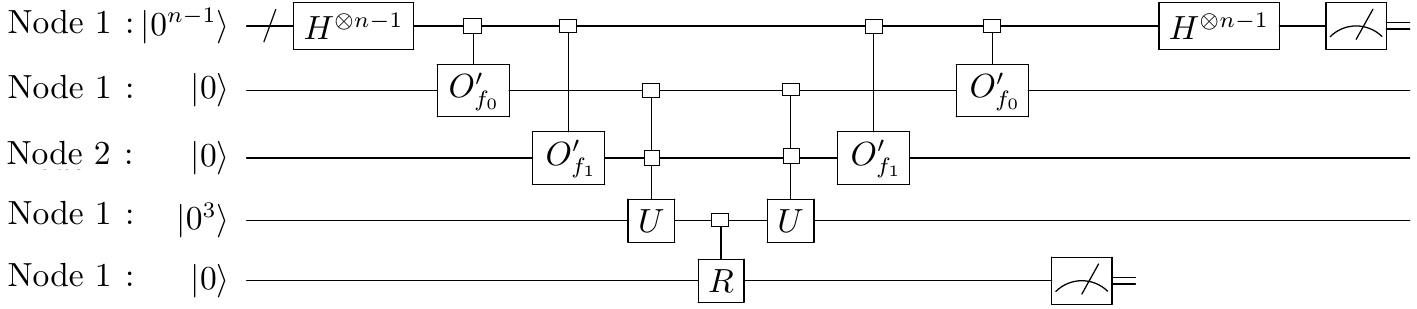}
  \caption{The circuit for the distributed quantum algorithm for DJ problem  (two computing nodes) (Algorithm \ref{algorithm2} ).}
  \label{algorithm_multiple_nodes (I)}
\end{figure}

\begin{figure}[H]
  \begin{minipage}{\linewidth}	
    \begin{algorithm}[H]
      \caption{Distributed quantum algorithm for DJ problem ($2^t$  distributed computing nodes)}
      \label{algorithm2}
      \begin{algorithmic}[1]
        
        \State $\ket{\phi_0} = |0^{n-t}\rangle\ket{0^{2^{t}+t+3}}$;

         \State $|\phi_1\rangle=\left(H^{\otimes n-t}\otimes I^{\otimes {2^{t}+t+3}}\right)\ket{\phi_0}$;

        \State $\ket{\phi_2}=\prod\limits_{w\in\{0,1\}^t}\left(O'_{f_{w}}\otimes I^{\otimes 2^t+t+2-BI(w)}\right)\ket{\phi_1}$;
        
        

         \State $\ket{\phi_3}=\left(I^{\otimes {n-t}}\otimes U\otimes I\right)\ket{\phi_2}$;
        
        \State $\ket{\phi_4}=\left(I^{\otimes {n-t+2^t}}\otimes R\right)\ket{\phi_3}$;
        
        \State $\ket{\phi_5}=\left(\prod\limits_{w\in\{0,1\}^t}\left(O'_{f_{w}}\otimes I^{\otimes 2^t+t+2-BI(w)}\right)\right)\left(I^{\otimes {n-t}}\otimes U\otimes I\right)\ket{\phi_4}$;

      \State  Measure the last qubit of  $\ket{\phi_5}$: if the result is 1, then output $f$ is balanced; if the result is 0, then denote the quantum state after measurement  as $\ket{\phi_6}$;

        \State $|\phi_{7}\rangle=
        \left(H^{\otimes n-t}\otimes I^{\otimes {2^t+t+3}}\right)|\phi_{6}\rangle$;

        \State Measure the first $n-t$ qubits of $\ket{\phi_7}$: if the result is  $0^{n-t}$, then output $f$ is constant;  if the result is not $0^{n-t}$, then output $f$ is balanced. 
      \end{algorithmic}
    \end{algorithm}

  \end{minipage}
  \end{figure}


\begin{figure}[H]
  \centering
  \includegraphics[width=6.4in]{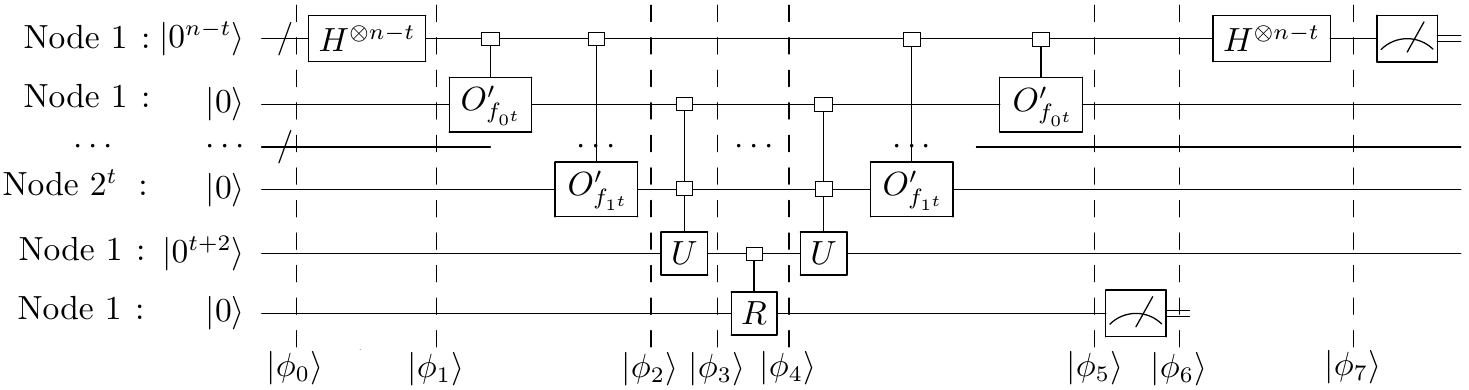}
  \caption{The circuit for the distributed quantum algorithm for DJ problem  ($2^t$ computing nodes) (Algorithm \ref{algorithm2} ).}
  \label{algorithm_multiple_nodes (I)}
\end{figure}





\subsection{Correctness analysis of  Algorithm \ref{algorithm2}}


In the following, we prove the correctness of  Algorithm \ref{algorithm2}.
The state after the first step of  Algorithm \ref{algorithm2} is: 
\begin{equation}
\begin{split}
  |\phi_1\rangle=&\left(H^{\otimes n-t}\otimes I^{\otimes {2^{t}+t+3}}\right)\ket{\phi_0}\\
  =&\frac{1}{\sqrt{2^{n-t}}}\sum_{u\in\{0,1\}^{n-t}}|u\rangle\ket{0^{2^{t}+t+3}}.
\end{split}
\end{equation}

Then Algorithm \ref{algorithm2}  queries the oracles $O'_{f_{w}}$ $\left(w\in\{0,1\}^t\right)$,  resulting in the following state:
\begin{equation}
\begin{split}
  |\phi_2\rangle=&\prod\limits_{w\in\{0,1\}^t}\left(O'_{f_{w}}\otimes I^{\otimes 2^t+t+2-BI(w)}\right)\ket{\phi_1}\\
  =&\frac{1}{\sqrt{2^{n-t}}}\sum_{u\in\{0,1\}^{n-t}}|u\rangle\bigotimes\limits_{w\in\{0,1\}^t}\ket{f_w(u)}
  |0^{t+3}\rangle.
\end{split}
\end{equation}

After  applying  operator $U$, we have the following state:
\begin{equation}
\begin{split}
  |\phi_3\rangle=&\left(I^{\otimes {n-t}}\otimes U\otimes I\right)\ket{\phi_2}\\=&\frac{1}{\sqrt{2^{n-t}}}\sum_{u\in\{0,1\}^{n-t}}|u\rangle
  \bigotimes\limits_{w\in\{0,1\}^t}\ket{f_w(u)}\Ket{\delta(u)}\Ket{0}, 
\end{split}
\end{equation}
where $\delta(u)=2^t-2\sum_{w\in\{0,1\}^t}f_w(u)$.

After  applying operator $R$, we have the following state:
\begin{equation}
\begin{split}
  |\phi_4\rangle=&\left(I^{\otimes {n-t+2^t}}\otimes R\right)\ket{\phi_3}\\
=&\frac{1}{\sqrt{2^{n-t}}}\sum_{u\in\{0,1\}^{n-t}}|u\rangle
\bigotimes\limits_{w\in\{0,1\}^t}\ket{f_w(u)}\Ket{\delta(u)}
\left(\frac{\delta(u)}{2^t}\ket{0}+\sqrt{1-\left(\frac{\delta(u)}{2^t}\right)^2}\ket{1}\right).
\end{split}
\end{equation}

After  applying  operator $U$ and $O'_{f_{w}}$ $(w\in\{0,1\}^t)$, resulting in the following state:
\begin{equation}
\begin{split}
\ket{\phi_5}=&\left(\prod\limits_{w\in\{0,1\}^t}\left(O'_{f_{w}}\otimes I^{\otimes 2^t+t+2-BI(w)}\right)\right)\left(I^{\otimes {n-t}}\otimes U\otimes I\right)\ket{\phi_4}\\=&\frac{1}{\sqrt{2^{n-t}}}\sum_{u\in\{0,1\}^{n-t}}|u\rangle|0^{2^t+t+2}\rangle\left(\frac{\delta(u)}{2^t}\ket{0}+\sqrt{1-\left(\frac{\delta(u)}{2^t}\right)^2}\ket{1}\right).
\end{split}
\end{equation}


After measuring  the last qubit of $\ket{\phi_5}$, if the result is $1$, then $\exists$ $u\in\{0,1\}^{n-t}$ such that $|\delta(u)|\neq 2^t$. From Corollary \ref{Cor1}, we know that $f$ is balanced. If the result is $0$, then we get the state:
\begin{equation}
\begin{split}
&|\phi_6\rangle=\sum_{u\in\{0,1\}^{n-t}}\frac{\delta(u)}{\sqrt{\sum\limits_{u\in\{0,1\}^{n-t}}\delta^2(u)}}|u\rangle\ket{0^{2^t+t+3}}.
\end{split}
\end{equation}


With Hadamard transformation on  the first $n-t$ qubits of $\ket{\phi_6}$, we get the following state:
\begin{equation}
\begin{split}
	|\phi_{7}\rangle=&
        \left(H^{\otimes n-t}\otimes I^{\otimes {2^t+t+3}}\right)|\phi_{6}\rangle\\
	=&\sum_{u,z\in\{0,1\}^{n-t}}\frac{\delta(u)}{\sqrt{\sum\limits_{u\in\{0,1\}^{n-t}}\delta^2(u)}}\frac{(-1)^{u\cdot z}}{\sqrt{2^{n-t}}}\ket{z}\ket{0^{2^t+t+3}}.
\end{split}
\end{equation} 



The probability of measuring the first $n-t$ qubits of $\ket{\phi_7}$ with the result of $0^{n-t}$ is
\begin{equation}
\left|\frac{1}{\sqrt{2^{n-t}\sum\limits_{u\in\{0,1\}^{n-t}}\delta^2(u)}}\sum\limits_{u\in\{0,1\}^{n-t}}\delta(u)\right|^2.
\end{equation}

After measuring the first $n-t$ qubits of $\ket{\phi_7}$, according to Theorem \ref{The2}, if the result is  $0^{n-t}$, then $f$ is  constant, otherwise $f$ is balanced. 




\subsection{Comparison with other algorithms}



We compare  with distributed classical deterministic algorithm. Algorithm \ref{algorithm2} needs two queries for each oracle  to solve DJ problem. However, distributed classical deterministic  algorithm needs to query oracles $O(2^{n-t})$ times in the worst case. Therefore, Algorithm \ref{algorithm2} has the advantage of exponential acceleration compared with the distributed classical deterministic algorithm.

Then, we compare  with DJ algorithm. In DJ algorithm, the number of qubits required by the implementation circuit of oracle corresponding to Boolean function $f$ is $n+1$. In Algorithm \ref{algorithm2}, the number of qubits required by the implementation circuit of oracle corresponding to subfunction $f_w$ is $n-t+1$. 

\section{Design and analysis of Algorithm  \ref{algorithm5}} \label{sec:Distributed quantum algorithm for  DJ problem (Algorithm 3)}





We synthesised the structural features of  DJ problem in  distributed scenario used in Algorithm \ref{algorithm3} and Algorithm \ref{algorithm2}  to design Algorithm \ref{algorithm5}. Algorithm \ref{algorithm5} combines the design methods and advantages of Algorithm \ref{algorithm3} and Algorithm \ref{algorithm2}, but the number of qubits required to implement certain unitary operators in Algorithm \ref{algorithm5} is less than that in Algorithm \ref{algorithm2}.

\subsection{Design of Algorithm \ref{algorithm5}}

In the following, we describe the unitary operators involved in Algorithm \ref{algorithm5}. 




Operator $O''_{f_{w'0}}$ in Algorithm \ref{algorithm5}
 is defined as: 
\begin{equation}
O''_{f_{w'0}}\ket{u}\ket{b}\ket{c}=\ket{u}\ket{b}\ket{c\oplus f_{w'0}(u)},
\end{equation}
where $w'\in \{0,1\}^{t-1}$, $u\in\{0,1\}^{n-t}$, $b\in\{0,1\}^{3\cdot BI(w'0)/2}$ and $c\in \{0,1\}$.

Operator $O''_{f_{w'1}}$ in Algorithm \ref{algorithm5}
 is defined as: 
\begin{equation}
O''_{f_{w'1}}\ket{u}\ket{b}\ket{c}=\ket{u}\ket{b}\ket{c\oplus f_{w'1}(u)},
\end{equation}
where $w'\in \{0,1\}^{t-1}$, $u\in\{0,1\}^{n-t}$, $b\in\{0,1\}^{3\cdot BI(w'0)/2+1}$ and $c\in \{0,1\}$. The function $BI$ is defined in Algorithm \ref{algorithm2}.

Operator $A$ in  Algorithm \ref{algorithm5} is defined as: 
\begin{equation}\label{A}
A\left(\bigotimes_{i\in\{0,1\}^{t-1}\setminus\{1^{t-1}\}}\ket{a_i}\ket{b}\right)\ket{a_{1^{t-1}}}\ket{c}|d\rangle\\
=\left(\bigotimes_{i\in\{0,1\}^{t-1}\setminus\{1^{t-1}\}}\ket{a_i}\ket{b}\right)\ket{a_{1^{t-1}}}\ket{c}\Ket{d\oplus\sum_{i\in\{0,1\}^{t-1}}a_i},
\end{equation}
 where $a_i\in \{0,1\}$, $i\in \{0,1\}^{t-1}$, $b\in\{0,1\}^2$, $c\in \{0,1\}$  and $d\in \{0,1\}^{t}$.




Operator $V$ in  Algorithm \ref{algorithm5} is defined as: 
\begin{equation}\label{V}
\begin{split}
V\ket{a}\ket{b}
\ket{c}
=&\ket{a}\ket{b}
\Ket{c\oplus\left(2^{t-1}-a-2b\right)},
\end{split}
\end{equation}
where $ a,b \in \{0,1\}^t$ 
and $c \in \{0,1\}^{t+1}$.




Operator $R'$ in  Algorithm \ref{algorithm5} is defined as: 
\begin{equation}\label{R'}
	R'\ket{d}\ket{e}=\ket{d}\left(\frac{d}{2^{t-1}}\ket{e}+(-1)^{e}\cdot\sqrt{1-\left(\frac{d}{2^{t-1}}\right)^2}\Ket{1\oplus e}\right),
\end{equation}
where $ d \in \{0,1\}^{t+1}$ and $e\in\{0,1\}$.

\begin{lemma}\label{Lem_AVR} 
The operator $A$ in equation (\ref{A}), the operator $V$ in equation (\ref{V}) and the operator $R'$ in equation (\ref{R'}) are all unitary.
\end{lemma}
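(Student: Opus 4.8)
The plan is to establish each of the three operators separately, following the same inner-product strategy already used for $U$ and $R$ in Lemma \ref{Lem_U}. For $A$ and $V$ the key observation is that both have the generic form of an XOR-oracle $W\ket{x}\ket{y}=\ket{x}\ket{y\oplus g(x)}$, in which a ``control'' register $\ket{x}$ is left untouched while a value $g(x)$ depending only on $x$ is written into a ``target'' register $\ket{y}$ by bitwise XOR. Any such $W$ permutes the computational basis bijectively; in fact it is its own inverse, since $(y\oplus g(x))\oplus g(x)=y$, and a linear map sending an orthonormal basis to an orthonormal basis is automatically unitary. Concretely, for $A$ in equation (\ref{A}) the control is the collection $\bigl(\bigotimes_{i}\ket{a_i}\ket{b}\bigr)\ket{a_{1^{t-1}}}\ket{c}$ and the written value is $g=\sum_{i\in\{0,1\}^{t-1}}a_i$, while for $V$ in equation (\ref{V}) the control is $\ket{a}\ket{b}$ and the written value is $g=2^{t-1}-a-2b$.

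First I would verify that in each case $g(x)$ is a legitimate label of the target register so that the XOR is meaningful: for $A$ the sum $\sum_i a_i$ ranges over $\{0,\dots,2^{t-1}\}$, which fits in the $t$-qubit register $\ket{d}$; for $V$ one checks that the quantity $2^{t-1}-a-2b$, which by equation (\ref{Delta1}) is exactly the $\Delta$-type value lying in $[-2^{t-1},2^{t-1}]$, is representable in the $(t+1)$-qubit register $\ket{c}$ under the chosen (signed) encoding. Having fixed the encoding, I would then reproduce the computation of Lemma \ref{Lem_U} verbatim: for two distinct basis inputs I compute $\bra{\text{out}'}\,\text{out}\rangle$ via $W^{\dagger}W$ and show it vanishes, which is orthonormality preservation and hence unitarity.

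For $R'$ in equation (\ref{R'}) the argument is that of the controlled rotation $R$. For each fixed control value $d$ the action on the target qubit sends $\ket{e{=}0}\mapsto \frac{d}{2^{t-1}}\ket{0}+\sqrt{1-(d/2^{t-1})^2}\,\ket{1}$ and $\ket{e{=}1}\mapsto -\sqrt{1-(d/2^{t-1})^2}\,\ket{0}+\frac{d}{2^{t-1}}\ket{1}$; these two images are orthonormal, since the squared norms sum to $(d/2^{t-1})^2+\bigl(1-(d/2^{t-1})^2\bigr)=1$ and their inner product is zero, so each $2\times2$ block is orthogonal. As distinct control states $\ket{d}\neq\ket{d'}$ are themselves orthogonal, $R'$ is block diagonal with unitary blocks and is therefore unitary. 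Formally I would again show $\bra{e'}\bra{d'}R'^{\dagger}R'\ket{d}\ket{e}=0$ whenever $\ket{d}\ket{e}\neq\ket{d'}\ket{e'}$, splitting into the cases $d\neq d'$ and $d=d'$.

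The main obstacle is not the algebra, which is routine once the operators are written as basis permutations or block-diagonal rotations, but the well-definedness checks that make $A$, $V$ and $R'$ honest maps on their full register spaces. For $A$ and $V$ this amounts to committing to an encoding in which the written integer is a valid register label (in particular handling the sign of $2^{t-1}-a-2b$ for $V$), and for $R'$ it amounts to confirming $|d|\le 2^{t-1}$ so that $\sqrt{1-(d/2^{t-1})^2}$ is real and the block is a genuine rotation. These range constraints hold because the values actually occurring in Algorithm \ref{algorithm5} are the $\delta$- and $\Delta$-type quantities bounded in absolute value by $2^{t-1}$, as recorded in equation (\ref{Delta1}) and used in Theorem \ref{The7}.
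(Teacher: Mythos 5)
Your proposal is correct and follows essentially the same route as the paper, which proves Lemma \ref{Lem_AVR} simply by noting it is ``similar to the proof of Lemma \ref{Lem_U}'', i.e.\ the inner-product orthogonality computation you reproduce for $A$, $V$ (as self-inverse XOR-permutations of the computational basis) and $R'$ (as a block-diagonal rotation). Your additional well-definedness remarks about the signed encoding of $2^{t-1}-a-2b$ and the bound $|\Delta(u)|\le 2^{t-1}$ are sensible housekeeping that the paper leaves implicit, but they do not change the argument.
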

\begin{proof}
Similar to the proof of Lemma \ref{Lem_U}. 
\end{proof}

\begin{figure}[H]
  \centering
  \includegraphics[width=6.4in]{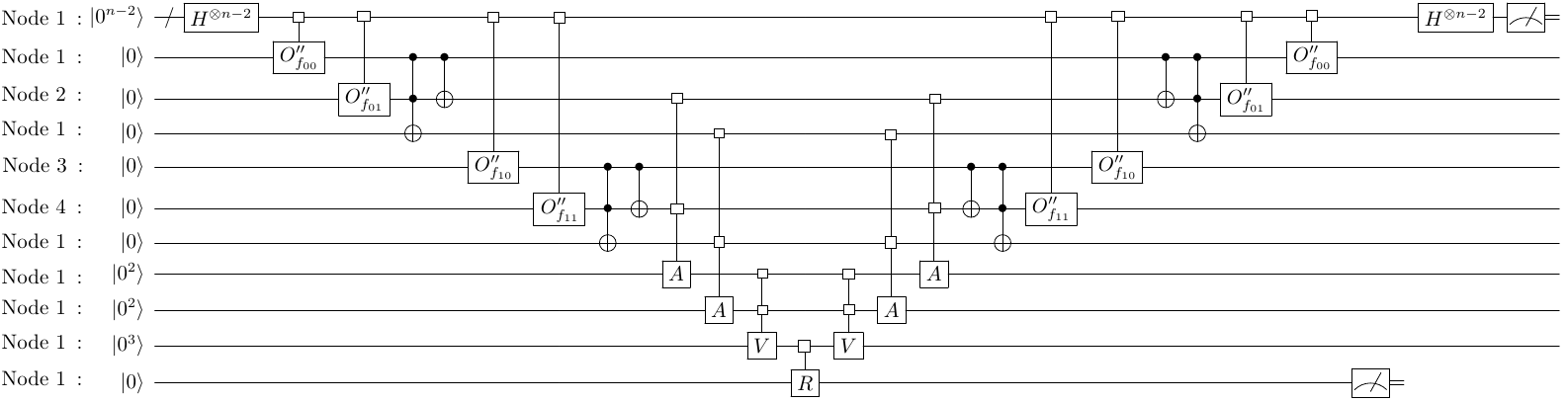}
  \caption{The circuit for the distributed quantum algorithm for DJ problem  (four computing nodes) (Algorithm \ref{algorithm5} ).}
  \label{algorithm_multiple_nodes (III)}
\end{figure}

\begin{figure}[H]
  \begin{minipage}{\linewidth}
    \begin{algorithm}[H]
      \caption{Distributed quantum algorithm for DJ problem ($2^t$  distributed computing nodes)}
      \label{algorithm5}
      \begin{algorithmic}[1]
        \State $|\Phi_0\rangle = |0^{n-t}\rangle|0^{3\cdot 2^{t-1}+3t+2}\rangle$;

        \State $\ket{\Phi_1} =\left(H^{\otimes n-t}\otimes I^{\otimes {3\cdot 2^{t-1}+3t+2}}\right)|\Phi_0\rangle$;

        \State
        $|\Phi_2\rangle=
        \prod\limits_{w'\in\{0,1\}^{t-1}}\left[\left(I^{\otimes n-t+3\cdot BI(w'0)/2}\otimes{\rm CNOT}\otimes I^{\otimes 3\cdot2^{t-1}+3t-3\cdot BI(w'0)/2}\right)\right.$
        
       $\quad\left(I^{\otimes n-t+3\cdot BI(w'0)/2}\otimes{\rm CCNOT}\otimes I^{\otimes 3\cdot2^{t-1}+3t-3\cdot BI(w'0)/2-1}\right)$
       
       $\quad\left.\left(O''_{f_{w'1}}\otimes I^{\otimes 3\cdot2^{t-1}+3t-3\cdot BI(w'0)/2}\right)\left(O''_{f_{w'0}}\otimes I^{\otimes 3\cdot2^{t-1}+3t-3\cdot BI(w'0)/2+1}\right)\right]\ket{\Phi_1}$

         \State 
         $|\Phi_3\rangle=\prod\limits_{i=1}^2\left(I^{\otimes n-t+i}\otimes A\otimes I^{\otimes (3-i)t+2}\right)\ket{\Phi_2}$;

         \State $\Ket{\Phi_4}=\left(I^{\otimes n-t+3\cdot 2^{t-1}}\otimes V\otimes I\right)\Ket{\Phi_3}$;
         
         
         

        \State$\Ket{\Phi_{5}}=\left(I^{\otimes n+3\cdot2^{t-1}+t}\otimes R'\right)\Ket{\Phi_4}$;
        
        
         

       \State 
       $\ket{\Phi_6}=\prod\limits_{w'\in\{0,1\}^{t-1}}\left[\left(I^{\otimes n-t+3\cdot BI(w'0)/2}\otimes{\rm CNOT}\otimes I^{\otimes 3\cdot2^{t-1}+3t-3\cdot BI(w'0)/2}\right)\right.$
        
       $\quad\left(I^{\otimes n-t+3\cdot BI(w'0)/2}\otimes{\rm CCNOT}\otimes I^{\otimes 3\cdot2^{t-1}+3t-3\cdot BI(w'0)/2-1}\right)$
       
       $\quad\left.\left(O''_{f_{w'1}}\otimes I^{\otimes 3\cdot2^{t-1}+3t-3\cdot BI(w'0)/2}\right)\left(O''_{f_{w'0}}\otimes I^{\otimes 3\cdot2^{t-1}+3t-3\cdot BI(w'0)/2+1}\right)\right]$
       
       $\quad\prod\limits_{i=1}^2\left(I^{\otimes n-t+i}\otimes A\otimes I^{\otimes (3-i)t+2}\right)\left(I^{\otimes n-t+3\cdot 2^{t-1}}\otimes V\otimes I\right)\ket{\Phi_5}$;

       \State  Measure the last qubit of  $\ket{\Phi_6}$: if the result is 1, then output $f$ is balanced; if the result is 0, then denote the quantum state after measurement  as $\ket{\Phi_7}$;
       
       

        \State $|\Phi_{8}\rangle=\left(H^{\otimes n-t}\otimes I^{\otimes {3\cdot 2^{t-1}+3t+2}}\right)|\Phi_{7}\rangle$;

        \State Measure the first $n-t$ qubits of $\ket{\Phi_8}$: if the result is  $0^{n-t}$, then output $f$ is constant;  if the result is not $0^{n-t}$, then output $f$ is balanced.
      \end{algorithmic}
    \end{algorithm}

  \end{minipage}
  \end{figure}

\begin{figure}[H]
  \centering
  \includegraphics[width=6.5in]{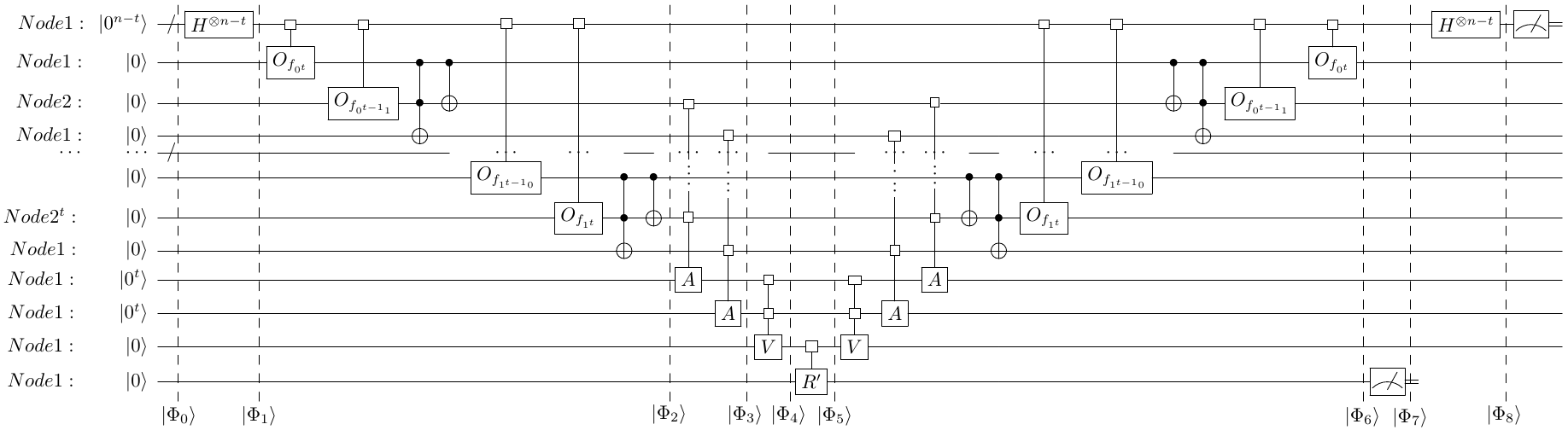}
  \caption{The circuit for the distributed quantum algorithm for DJ problem  ($2^t$ computing nodes) (Algorithm \ref{algorithm5} ).}
  \label{algorithm_multiple_nodes (III)}
\end{figure}

\subsection{Correctness analysis of  Algorithm \ref{algorithm5}}

In the following, we prove the correctness of  Algorithm \ref{algorithm5}.
The state after the first step of  Algorithm \ref{algorithm5} is:
\begin{equation}
\begin{split}
  \ket{\Phi_1}=&\left(H^{\otimes n-t}\otimes I^{\otimes {3\cdot 2^{t-1}+3t+2}}\right)|\Phi_0\rangle\\
  =&\frac{1}{\sqrt{2^{n-t}}}\sum_{u\in\{0,1\}^{n-t}}|u\rangle\ket{0^{3\cdot 2^{t-1}+3t+2}}.
\end{split} 
\end{equation} 

Then Algorithm \ref{algorithm5} queries the oracles $O''_{f_{w'0}}$ and $O''_{f_{w'1}}$ $(w'\in\{0,1\}^{t-1})$, and applies the operators ${\rm CCNOT}$ and ${\rm CNOT}$, resulting in the following state:
\begin{equation}
\begin{split}
  |\Phi_2\rangle=&\prod\limits_{w'\in\{0,1\}^{t-1}}\left[\left(I^{\otimes n-t+3\cdot BI(w'0)/2}\otimes{\rm CNOT}\otimes I^{\otimes 3\cdot2^{t-1}+3t-3\cdot BI(w'0)/2}\right)\right.\\
        &\left(I^{\otimes n-t+3\cdot BI(w'0)/2}\otimes{\rm CCNOT}\otimes I^{\otimes 3\cdot2^{t-1}+3t-3\cdot BI(w'0)/2-1}\right)\\
       &\left.\left(O''_{f_{w'1}}\otimes I^{\otimes 3\cdot2^{t-1}+3t-3\cdot BI(w'0)/2}\right)\left(O''_{f_{w'0}}\otimes I^{\otimes 3\cdot2^{t-1}+3t-3\cdot BI(w'0)/2+1}\right)\right]\ket{\Phi_1}\\
=&\frac{1}{\sqrt{2^{n-t}}}\sum_{u\in\{0,1\}^{n-t}}|u\rangle\left(\bigotimes_{w'\in\{0,1\}^{t-1}}|f_{w'0}(u)\rangle|f_{w'0}(u)\oplus f_{w'1}(u)\rangle|f_{w'0}(u)\land f_{w'1}(u)\rangle\right)\ket{0^{3t+2}}.
\end{split} 
\end{equation}

By applying the operator $A$ on $\ket{\Phi_2}$, we obtain the following state:
\begin{equation}
\begin{split}
|\Phi_3\rangle=&\prod\limits_{i=1}^2\left(I^{\otimes n-t+i}\otimes A\otimes I^{\otimes (3-i)t+2}\right)\ket{\Phi_2}\\
=&\frac{1}{\sqrt{2^{n-t}}}\sum_{u\in\{0,1\}^{n-t}}|u\rangle\left(\bigotimes_{w'\in\{0,1\}^{t-1}}|f_{w'0}(u)\rangle|f_{w'0}(u)\oplus f_{w'1}(u)\rangle|f_{w'0}(u)\land f_{w'1}(u)\rangle\right)\\
&\left|\sum_{w'\in\{0,1\}^{t-1}}f_{w'0}(u)\oplus f_{w'1}(u)\right\rangle\Ket{\sum_{w'\in\{0,1\}^{t-1}}f_{w'0}(u)\land f_{w'1}(u)}\Ket{0^{t+2}}.
\end{split} 
\end{equation}    

By applying the operator $V$ on $\ket{\Phi_3}$, we get the following state:
\begin{equation}
\begin{split}
\Ket{\Phi_4}=&\left(I^{\otimes n-t+3\cdot 2^{t-1}}\otimes V\otimes I\right)\Ket{\Phi_3}\\
=&\frac{1}{\sqrt{2^{n-t}}}\sum_{u\in\{0,1\}^{n-t}}|u\rangle\left(\bigotimes_{w'\in\{0,1\}^{t-1}}|f_{w'0}(u)\rangle|f_{w'0}(u)\oplus f_{w'1}(u)\rangle|f_{w'0}(u)\land f_{w'1}(u)\rangle\right)\\
&\left|\sum_{w'\in\{0,1\}^{t-1}}f_{w'0}(u)\oplus f_{w'1}(u)\right\rangle\Ket{\sum_{w'\in\{0,1\}^{t-1}}f_{w'0}(u)\land f_{w'1}(u)}
\Ket{\Delta(u)}\Ket{0},
\end{split} 
\end{equation}   
where $\Delta(u)=2^{t-1}-\sum_{w'\in\{0,1\}^{t-1}}f_{w'0}(u)\oplus f_{w'1}(u)-2\sum_{w'\in\{0,1\}^{t-1}}f_{w'0}(u)\land f_{w'1}(u)$.

Then we apply the operator  $R'$ on $\Ket{\Phi_4}$,  and the following state is obtained:
\begin{equation}
\begin{split}
\Ket{\Phi_{5}}=&\left(I^{\otimes n+3\cdot2^{t-1}+t}\otimes R'\right)\Ket{\Phi_4}\\
=&\frac{1}{\sqrt{2^{n-t}}}\sum_{u\in\{0,1\}^{n-t}}|u\rangle\left(\bigotimes_{w'\in\{0,1\}^{t-1}}|f_{w'0}(u)\rangle|f_{w'0}(u)\oplus f_{w'1}(u)\rangle|f_{w'0}(u)\land f_{w'1}(u)\rangle\right)\\
&\left|\sum_{w'\in\{0,1\}^{t-1}}f_{w'0}(u)\oplus f_{w'1}(u)\right\rangle\Ket{\sum_{w'\in\{0,1\}^{t-1}}f_{w'0}(u)\land f_{w'1}(u)}
\Ket{\Delta(u)}\\
&\left(\frac{\Delta(u)}{2^{t-1}}\ket{0}+\sqrt{1-\left(\frac{\Delta(u)}{2^{t-1}}\right)^2}\ket{1}\right).
\end{split} 
\end{equation}  

In addition, we restore  the middle $3\cdot 2^{t-1}+3t+1$ qubits of $\Ket{\Phi_{5}}$ to $\Ket{0^{3\cdot 2^{t-1}+3t+1}}$, resulting in the following state:
\begin{equation}
\begin{split}
\ket{\Phi_6}=&\prod\limits_{w'\in\{0,1\}^{t-1}}\left[\left(I^{\otimes n-t+3\cdot BI(w'0)/2}\otimes{\rm CNOT}\otimes I^{\otimes 3\cdot2^{t-1}+3t-3\cdot BI(w'0)/2}\right)\right.\\
        &\left(I^{\otimes n-t+3\cdot BI(w'0)/2}\otimes{\rm CCNOT}\otimes I^{\otimes 3\cdot2^{t-1}+3t-3\cdot BI(w'0)/2-1}\right)\\
       &\left.\left(O''_{f_{w'1}}\otimes I^{\otimes 3\cdot2^{t-1}+3t-3\cdot BI(w'0)/2}\right)\left(O''_{f_{w'0}}\otimes I^{\otimes 3\cdot2^{t-1}+3t-3\cdot BI(w'0)/2+1}\right)\right]\\
       &\quad\prod\limits_{i=1}^2\left(I^{\otimes n-t+i}\otimes A\otimes I^{\otimes (3-i)t+2}\right)\left(I^{\otimes n-t+3\cdot 2^{t-1}}\otimes V\otimes I\right)\ket{\Phi_5}\\
=&\frac{1}{\sqrt{2^{n-t}}}\sum_{u\in\{0,1\}^{n-t}}
|u\rangle\Ket{0^{3\cdot 2^{t-1}+3t+1}}\left(\frac{\Delta(u)}{2^{t-1}}\ket{0}+\sqrt{1-\left(\frac{\Delta(u)}{2^{t-1}}\right)^2}\ket{1}\right).
\end{split} 
\end{equation}


After measurement on the last qubit of $\ket{\Phi_6}$, if the result is $1$, then $\exists$ $u\in\{0,1\}^{n-t}$ such that $|\Delta(u)|\neq 2^{t-1}$. From Corollary \ref{Cor6}, we know that $f$ is balanced. If the result is $0$, then we get the state:
\begin{equation}
\begin{split}
|\Phi_{7}\rangle=\sum_{u\in\{0,1\}^{n-t}}\frac{\Delta(u)}{\sqrt{\sum\limits_{u\in\{0,1\}^{n-t}}\Delta^2(u)}}|u\rangle\ket{0^{3\cdot 2^{t-1}+3t+2}}.
\end{split}
\end{equation}


By using Hadamard transformation on  the first $n-t$ qubits of $\ket{\Phi_7}$, we get the following state:
\begin{equation}
\begin{split}
	|\Phi_{8}\rangle=&\left(H^{\otimes n-t}\otimes I^{\otimes {3\cdot 2^{t-1}+3t+2}}\right)|\Phi_{7}\rangle\\
	=&\sum_{u,z\in\{0,1\}^{n-t}}\frac{\Delta(u)}{\sqrt{\sum\limits_{u\in\{0,1\}^{n-t}}\Delta^2(u)}}\frac{(-1)^{u\cdot z}}{\sqrt{2^{n-t}}}\ket{z}\ket{0^{3\cdot 2^{t-1}+3t+2}}.
\end{split} 
\end{equation}

The probability of measuring the first $n-t$ qubits of $\ket{\Phi_8}$ with the result of $0^{n-t}$ is
\begin{equation}
\left|\frac{1}{\sqrt{2^{n-t}\sum\limits_{u\in\{0,1\}^{n-t}}\Delta^2(u)}}\sum\limits_{u\in\{0,1\}^{n-t}}\Delta(u)\right|^2.
\end{equation}




After measurement on the first $n-t$ qubits of $\ket{\Phi_8}$, according to Theorem \ref{The7}, 
if the result is  $0^{n-t}$, then $f$ is  constant, otherwise $f$ is balanced.


\subsection{Comparison with other algorithms}
Actually,  the comparisons of Algorithm \ref{algorithm5} with the   distributed quantum algorithm for DJ problem proposed previously \cite{avron_quantum_2021},  distributed classical deterministic algorithm, and DJ algorithm  are the same as that of Algorithm \ref{algorithm2}.

In the following, we analyze and compare the three algorithms we designed.  
Compared  Algorithm \ref{algorithm3}  with  Algorithm \ref{algorithm2}, Algorithm \ref{algorithm3}  has the advantage that the number of quantum gates and qubits required by the circuit is reduced. Algorithm \ref{algorithm2}  has the advantage of scaling to multiple computing nodes. 

Compared  Algorithm \ref{algorithm2}  with Algorithm \ref{algorithm5}, Algorithm \ref{algorithm2}  has the advantage of fewer total qubits and quantum gates. Algorithm \ref{algorithm5}  has the advantage that the qubit number for realizing some single unitary operators decreases. 
For instance, the number of qubits required for the unitary operator $V$ in Algorithm \ref{algorithm5} is $3t+1$, while the number of qubits required for the unitary operator $U$ in Algorithm \ref{algorithm2} is $2^t+t+2$. 

Although the number of qubits required for the unitary operator $A$ in Algorithm \ref{algorithm5} is $3\cdot2^{t-1}+t-1$, actually,  with the help of auxiliary $2^{t-1}$ qubits, we can put  the $2^{t-1}$ control qubits $\ket{a_i}$ of  the operator $A$ together by teleportation, and replace  the operator $A$ with  the operator $A'$,  the operator $A'$ is defined as:
\begin{equation}
A'\left(\bigotimes_{i\in\{0,1\}^{t-1}}\ket{a_i}\right)|d\rangle\\
=\left(\bigotimes_{i\in\{0,1\}^{t-1}}\ket{a_i}\right)\Ket{d\oplus\sum_{i\in\{0,1\}^{t-1}}a_i},
\end{equation}
where $\forall a_i\in \{0,1\}$,  $i\in\{0,1\}^{t-1}$ and $d\in \{0,1\}^{t}$.
It is clear that the qubit number of $A'$ is $2^{t-1}+t$, which is less than the qubit number of  $A$ in Algorithm \ref{algorithm5} and qubit number of $U$ in Algorithm \ref{algorithm2}.

 The performance of our  algorithms is shown in TAB. \ref{Tab1}.

\setcounter{table}{0}

\begin{table}[H]
\begin{center}
\begin{tabular}{|l|c|c|c|}
\hline
\diagbox{\makecell[c]{Name}}{\makecell[c]{Index}} &\makecell[c]{Total number\\ of qubits} &\makecell[c]{Number of\\ quantum gates}&\makecell[c]{The number of qubits of \\ a single unitary operator} \\
\hline
 \quad Algorithm \ref{algorithm3} & $n$ & 5 & The qubit number of $Z$ is $1$ \\ 
\hline
 \quad Algorithm \ref{algorithm2} & $n+2^t+3$ & $2^{t+1}+6$ & \makecell[c]{The qubit number of $U$ is $2^t+t+2$\\ The qubit number of $R$ is $t+3$}\\  
 \hline
 \quad Algorithm \ref{algorithm5} & $n+3\cdot2^{t-1}+2t+2$ & $2^{t+2}+10$ & \makecell[c]{The qubit number of $A$ is $3\cdot2^{t-1}+t-1$\\ The qubit number of $V$ is $3t+1$\\ The qubit number of $R'$ is $t+2$} \\ 
\hline
\end{tabular}
\caption{The performance table of our algorithms.}\label{Tab1}
\end{center}
\end{table}

\section{Concluding remarks} \label{sec:conclusions}

In comparsion to quantum algorithms, dsitributed quantum algorithms usually have the advantages of less number of input qubits and circuit depth \cite{avron_quantum_2021,Elementary_gates_1995,beals_efficient_2013,Qiu2017DQC,Qiu22,Tan2022DQCSimon,Xiao2023DQAShor}, and this subject is therefore  important and practical. 
However, for designing efficient distributed quantum algorithms, the structure of problem to be solved should be clarified in distributed framework.

	In this paper, we have discovered the essential structure of DJ problem in  distributed scenario by presenting a number of equivalence characterizations between a DJ problem being constant (balanced) and the properties of its subfunctions. These structure properties have provided fundamental ideas for designing distributed exact DJ algorithms. If these structure properties are ignored and we just use DJ algorithm to solve each subfunction, then the result is not exact and the error will be higher and higher with the increasing of number of subfunctions.

By using the structure properties of DJ problem in distributed situation, we have designed three distributed exact quantum algorithms for solving DJ problem, that is,  Algorithm \ref{algorithm3}, Algorithm \ref{algorithm2} and Algorithm \ref{algorithm5}. Algorithm \ref{algorithm3} can only solve DJ problem in distributed condition with two subfunctions. So, we have  designed 
 Algorithm \ref{algorithm2}, and it can solve DJ problem in distributed situation   with multiple subfunctions.

By combining   the  ideas and methods of Algorithm \ref{algorithm3} and Algorithm \ref{algorithm2}, we have further given  Algorithm \ref{algorithm5},
which also can solve DJ problem in distributed situation with multiple computing nodes, and some of its single quantum gates require less qubits than Algorithm \ref{algorithm2}.

These distributed exact DJ algorithms we have designed have the following advantages. First,  compared with distributed classical deterministic  algorithm, our algorithms have  exponential advantage in query complexity; second,  compared with  DJ algorithm, the single query operator in our algorithms requires fewer qubits, and the depth of circuit is reduced \cite{Elementary_gates_1995}, which has better anti-noise performance.

By the way, since each oracle in Algorithm \ref{algorithm2} and Algorithm \ref{algorithm5} is controlled by the same set of control bits and is serially connected,  Algorithm \ref{algorithm2} and Algorithm \ref{algorithm5} are serial quantum query algorithms.
In fact, the oracle query of each computing node in Algorithm \ref{algorithm2} and Algorithm \ref{algorithm5} can be completed in parallel. With the help of auxiliary $\left(2^t-1\right)\left(n-t\right)$ qubits, we can change the state
 (i.e. $\dfrac{1}{\sqrt{2^{n-t}}}\sum_{u\in\{0,1\}^{n-t}}|u\rangle$) of the control register after the first Hadamard transformation to $\dfrac{1}{\sqrt{2^{n-t}}}\sum_{u\in\{0,1\}^{n-t}}\underbrace{|u\rangle|u\rangle \ldots |u\rangle}_{2^t}$. That is to say, we can change the control register from one group to the same $2^t$ groups.

More exactly, after the first Hadamard transformation, we can teleport each group of $n-t$ control qubits to every computing node and use these to control the oracles of the computing nodes.
By teleporting control qubits, we can replace  $O'_{f_w}$ in Algorithm \ref{algorithm2} and  $O''_{f_w}$ in Algorithm \ref{algorithm5} by $O_{f_w}$.

It is clear that in general  the number of qubits needed for $O_{f_w}$ is $n-t+1$, which is less than those for $O'_{f_w}$ and $O''_{f_w}$, due to the fact that $O_{f_w}$ does not necessarily have to cross the line. Therefore, using quantum teleportation to transmit control bits would not only change Algorithm \ref{algorithm2} and Algorithm \ref{algorithm5} to parallel quantum query algorithms, but also reduce the number of qubits required for each of their oracles. However, the use of quantum teleportation may increase the communication complexity of algorithms.

In sequential researches,  we would like to study distributed quantum algorithms for solving generalized DJ problem, generalized Simon problem, and other hidden group problems.




\section*{Acknowledgements}This work is supported in part by the National Natural Science Foundation of China (Nos. 61876195, 61572532), and the Natural Science Foundation of Guangdong Province of China (No. 2017B030311011).

\appendix
\section{DJ Algorithm}
\label{DJ Algorithm}

\begin{figure}[H]
	\begin{minipage}{\linewidth}	
		\begin{algorithm}[H]	
			\caption{DJ algorithm}
			\label{DJ algorithm}
			\begin{algorithmic}[1]
				\State 
				$|\psi_0\rangle=|0\rangle^{\otimes n}|1\rangle$;
							
				\State 
				$|\psi_1\rangle=H^{\otimes n+1}|\psi_0\rangle=\frac{1}{\sqrt{2^n}}\sum\limits_{x\in\{0 , 1\}^n}|x\rangle|-\rangle$;
								
				\State 			
				$|\psi_2\rangle=O_f|\psi_1\rangle=\frac{1}{\sqrt{2^n}}\sum\limits_{x\in\{0 , 1\}^n}|x\rangle\Big(\frac{|0\oplus f(x)\rangle-|1\oplus f(x)\rangle}{\sqrt{2}}\Big)=\frac{1}{\sqrt{2^n}}\sum\limits_{x\in\{0 , 1\}^n}(-1)^{f(x)}|x\rangle|-\rangle$;
							
				\State 			
				$|\psi_3\rangle=H^{\otimes n+1}|\psi_2\rangle=\sum\limits_{x,z\in\{0 , 1\}^n}\dfrac{(-1)^{x\cdot z+f(x)}}{2^n}|z\rangle|1\rangle$;
				
				\State Measure the first $n$ qubits of $\ket{\psi_3}$:  if the result is not  $0^{n}$, then output $f$ is balanced; if the result is $0^{n}$, then output $f$ is constant.
			\end{algorithmic}
		\end{algorithm}
		\end{minipage}
	\end{figure}

\begin{figure}[H]
\centering
\includegraphics[scale=0.45]{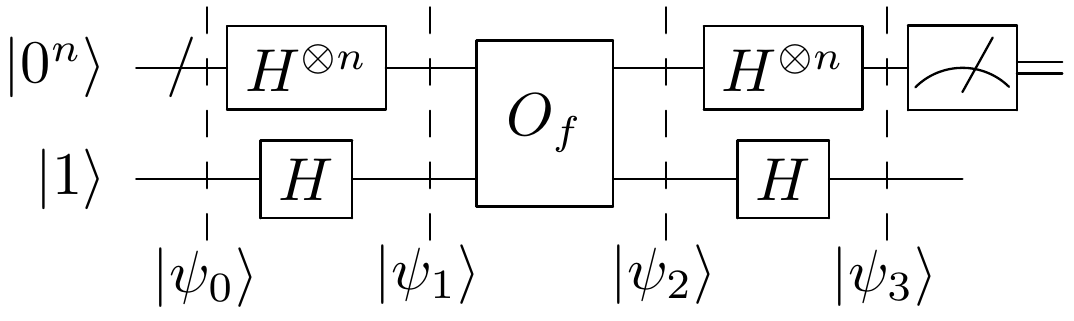}
\caption{The circuit for  DJ algorithm (Algorithm \ref{DJ algorithm} ).}
\label{fig:1}
\end{figure}

\section{Distributed DJ algorithm for multiple computing nodes with errors}
\label{Distributed DJ algorithm for multiple computing nodes with errors}

In the following, we first give the algorithm ${\rm DJ}_w$ that acts on  subfunction $f_w$.


\begin{figure}[H]
	\begin{minipage}{\linewidth}	
		\begin{algorithm}[H]	
			\caption{${\rm DJ}_w$ algorithm}
			\label{DJw algorithm}
			\begin{algorithmic}[1]
				\State 
				$|\psi'_0\rangle=|0^{n-t}\rangle|1\rangle$;
							
				\State 
				$|\psi'_1\rangle=H^{\otimes n-t+1}|\psi'_0\rangle=\frac{1}{\sqrt{2^{n-t}}}\sum\limits_{x\in\{0 , 1\}^{n-t}}|x\rangle|-\rangle$;
								
				\State 			
				$|\psi'_2\rangle=O_{f_w}|\psi'_1\rangle=\frac{1}{\sqrt{2^{n-t}}}\sum\limits_{x\in\{0 , 1\}^{n-t}}(-1)^{f_w(u)}|x\rangle|-\rangle$;
							
				\State 			
				$|\psi'_3\rangle=H^{\otimes n-t+1}|\psi'_2\rangle=\sum\limits_{x,z\in\{0 , 1\}^{n-t}}\dfrac{(-1)^{x\cdot z+f_w(u)}}{2^{n-t}}|z\rangle|1\rangle$;
				
				\State Measure the first $n-t$ qubits of $\ket{\psi'_3}$:  if the result is not  $0^{n-t}$, then output $f_w$ is not constant; if the result is $0^{n-t}$, then output $f_w$ is not balanced.
			\end{algorithmic}
		\end{algorithm}
		\end{minipage}
	\end{figure}

\begin{figure}[H]
\centering
\includegraphics[scale=0.65]{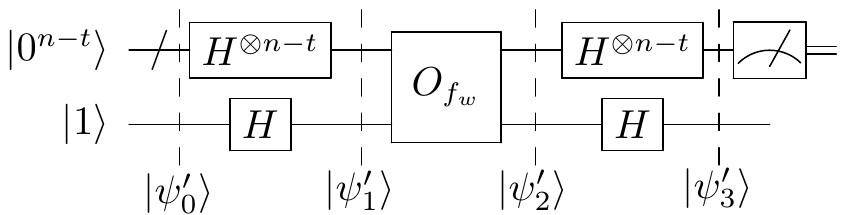}
\caption{The circuit for  ${\rm DJ}_w$ algorithm (Algorithm \ref{DJw algorithm} ).}
\label{fig:2}
\end{figure}

In the following, we design  Algorithm \ref{Distributed DJ algorithm for multiple computing nodes with errors_Algorithm flow}, which generalizes the algorithm in \cite{avron_quantum_2021} .

\begin{figure}[H]
	\begin{minipage}{\linewidth}	
		\begin{algorithm}[H]	
			\caption{Distributed DJ algorithm for multiple computing nodes with errors}
			\label{Distributed DJ algorithm for multiple computing nodes with errors_Algorithm flow}
			\begin{algorithmic}[1]
				\State 
				Decompose Boolean function $f:\{0,1\}^n\rightarrow \{0,1\}$ into $2^t$ subfunctions $f_w$ $(w\in\{0,1\}^t)$ as Equation (\ref{General method of function decomposition});
							
				\State 
				Apply the ${\rm DJ}_w$ algorithm to subfunction  $f_w$;
								
				\State 			
				Record  the measurement  result  of ${\rm DJ}_w$ algorithm as  $M_w$;
							
				\State 			
				If  there  is  $M_w$ that  is not  $0^{n-t}$,  then output  $f$ is   balanced;
				
				\State If  all  $M_w$ are $0^{n-t}$, then output  $f$  is  constant.
			\end{algorithmic}
		\end{algorithm}
		\end{minipage}
	\end{figure}


In the following, we give the error analysis for Algorithm \ref{Distributed DJ algorithm for multiple computing nodes with errors_Algorithm flow}.


Let
$l_w=|\{u\in\{0,1\}^{n-t}|f_w(u)=1\}|$,
where $w\in\{0,1\}^t$.

The probability that $M_w=0^{n-t}$ in the case $l_w=k_w$ is
\begin{equation}\label{Miprob}
\begin{split}
&\Pr(M_w=0^{n-t}| l_w=k_w)\\
=&\left|\frac{1}{2^{n-t}}[k_w\cdot(-1)+(2^{n-t}-k_w)\cdot1]\right|^2\\
=&\left(\dfrac{2^{t+1}}{N}k_w-1\right)^2.
\end{split}
\end{equation}


The  probability that Algorithm \ref{Distributed DJ algorithm for multiple computing nodes with errors_Algorithm flow}  misidentifies a balanced function as constant is 
\begin{equation}\label{Distributed DJ algorithm for multiple computing nodes with errors_Algorithm_probability}
\begin{split}
&\Pr(f\ is\ \ balanced,M_w=0^{n-t},\forall w\in\{0,1\}^t)\\
=&\sum\limits_{\substack{\sum\limits_{w\in\{0,1\}^t}k_w=\frac{N}{2}\\ 0\leq k_w\leq\frac{N}{2^t}}}\Pr(l_w=k_w)\Pr(M_w=0^{n-t},\forall w\in\{0,1\}^t |l_w=k_w)\\
=&\sum\limits_{\substack{\sum\limits_{w\in\{0,1\}^t}k_w=\frac{N}{2}\\ 0\leq k_w\leq\frac{N}{2^t}\\ }}\frac{\dbinom{N/2^t}{k_0}\dbinom{N/2^t}{k_1}\cdots\dbinom{N/2^t}{k_{2^t-1}}}{\dbinom{N}{N/2}}\prod\limits_{w\in\{0,1\}^t}\left(\dfrac{2^{t+1}}{N}k_w-1\right)^2.\\
>0.
\end{split}
\end{equation}

Although the algorithm in \cite{avron_quantum_2021} can be extended to the general case of multiple distributed computing nodes, i.e. Algorithm \ref{Distributed DJ algorithm for multiple computing nodes with errors_Algorithm flow}, it follows from equation \ref{Distributed DJ algorithm for multiple computing nodes with errors_Algorithm_probability} that the  probability that Algorithm \ref{Distributed DJ algorithm for multiple computing nodes with errors_Algorithm flow}  misidentifies a balanced function as constant is at least  $0$. 

\section{Examples of the structure of  DJ problem in  distributed scenario}
\label{Examples of the structure of  DJ problem in  distributed scenario}

\begin{example}
Given a Boolean function $f:\{0,1\}^3\rightarrow\{0,1\}$ for DJ problem, decompose $f$ into two subfunctions: $f_{0}$ and $f_{1}$, which are assumed to be as follows.
\begin{table}[H]
\begin{center}
\begin{tabular}{|l|c|c|}
\hline
\diagbox{\makecell[c]{$u$}}{\makecell[c]{$f_w(u)$}}{\makecell[c]{$w$}} &\makecell[c]{0} &\makecell[c]{1}\\
\hline
 \quad 00 &   1  &  0  \\ 
\hline
 \quad 01 &   0  &  0 \\  
 \hline
 \quad 10 &   0  &  1 \\ 
  \hline
 \quad 11 &   1  & 1 \\ 
\hline
\end{tabular}
\caption{Example of the structured table for  DJ problem in  distributed scenario with two subfunctions}\label{Tab_Example of a structured table for a DJ problem in a distributed scenario with two subfunctions}
\end{center}
\end{table}

It is clear that 
\begin{align}
B_{00}=&|\{u\in\{0,1\}^{2}|f_0(u)=f_1(u)=0\}|=1.\\
B_{11}=&|\{u\in\{0,1\}^{2}|f_0(u)= f_1(u)=1\}|=1.\\
 M=&|\{u\in\{0,1\}^{2}|f_0(u)\oplus f_1(u)=0\}|=2. 
\end{align}

Therefore 
\begin{equation}
B_{00}=B_{11}=M/2.
\end{equation}

From Theorem \ref{The4}, we can deduce that $f$ is balanced.

\end{example}

\begin{example}
Given a Boolean function $f:\{0,1\}^4\rightarrow\{0,1\}$ for DJ problem, decompose $f$ into four subfunctions: $f_{00}$, $f_{01}$, $f_{10}$ and $f_{11}$, which are assumed to be as follows.

\begin{table}[H]
\begin{center}
\begin{tabular}{|l|c|c|c|c|}
\hline
\diagbox{\makecell[c]{$u$}}{\makecell[c]{$f_w(u)$}}{\makecell[c]{$w$}} &\makecell[c]{00} &\makecell[c]{01}&\makecell[c]{10}&\makecell[c]{11} \\
\hline
 \quad 00 &   1  &  0 & 1  & 0 \\ 
\hline
 \quad 01 &   1  &  0 & 1 & 1\\  
 \hline
 \quad 10 &   0  &  1 & 0 & 0\\ 
  \hline
 \quad 11 &   1  & 1 &  0 &  0\\ 
\hline
\end{tabular}
\caption{Example of the structured table for  DJ problem in  distributed scenario with four subfunctions}\label{Tab_Example of a structured table for a DJ problem in a distributed scenario with four subfunctions}
\end{center}
\end{table}

It is clear that 
\begin{equation}
\delta(00)=0;\ \delta(01)=-2;\ \delta(10)=2;\ \delta(11)=0.
\end{equation}

Therefore 
\begin{equation}
\sum\limits_{u\in\{0,1\}^{2}} \delta(u) = 0.
\end{equation}

From Theorem \ref{The2}, we can deduce that $f$ is balanced.

\end{example}

\begin{example}
Given a Boolean function $f:\{0,1\}^4\rightarrow\{0,1\}$ for DJ problem, decompose $f$ into four subfunctions: $f_{00}$, $f_{01}$, $f_{10}$ and $f_{11}$, which are assumed to be as follows.

\begin{table}[H]
\begin{center}
\begin{tabular}{|l|c|c|c|c|}
\hline
\diagbox{\makecell[c]{$u$}}{\makecell[c]{$f_w(u)$}}{\makecell[c]{$w$}} &\makecell[c]{00} &\makecell[c]{01}&\makecell[c]{10}&\makecell[c]{11} \\
\hline
 \quad 00 &   1  &  1 & 0  & 0 \\ 
\hline
 \quad 01 &   1  &  0 & 1 & 1\\  
 \hline
 \quad 10 &   0  &  1 & 0 & 0\\ 
  \hline
 \quad 11 &   1  &  0 &  1 &  0\\ 
\hline
\end{tabular}
\caption{Example of the structured table for  DJ problem in  distributed scenario with four subfunctions}\label{Tab_Example of a structured table for a DJ problem in a distributed scenario with four subfunctions}
\end{center}
\end{table}

It is clear that 
\begin{equation}
\Delta(00)=0;\ \Delta(01)=-1;\ \Delta(10)=1;\ \Delta(11)=0.
\end{equation}

Therefore 
\begin{equation}
\sum\limits_{u\in\{0,1\}^{2}} \Delta(u) = 0.
\end{equation}

From Theorem \ref{The7}, we can deduce that $f$ is balanced.

\end{example}

\section{Distributed quantum algorithm for DJ problem with errors (four distributed computing nodes)}
\label{Distributed quantum algorithm for DJ problem with errors (four distributed computing nodes)}

\begin{figure}[H]
	\begin{minipage}{\linewidth}	
		\begin{algorithm}[H]	
			\caption{Distributed quantum algorithm for DJ problem with errors (four distributed computing nodes)}
			\label{algorithmDDJerror}
			\begin{algorithmic}[1]
				\State $|\varphi'_0\rangle = |0^{n-2}\rangle\ket{0^{n-2}}$;
				
				\State $|\varphi'_1\rangle = \left(H^{\otimes n-2}\otimes I\right)|\psi'_0\rangle$; 
				
				\State 		
				$|\varphi'_2\rangle=O_{f_{01}}O_{f_{00}}|\varphi'_1\rangle$;
				
				\State 
				$|\varphi'_3\rangle=(I\otimes Z)|\varphi'_2\rangle$;
				
				\State 		
				$|\varphi'_4\rangle=O_{f_{11}}O_{f_{10}}|\varphi'_3\rangle$;
				
				\State Measure the last qubit of $|\varphi'_4\rangle$: if the result is 1, then output $f$ is balanced; if the result is 0, then denote the quantum state after measurement  as $\ket{\varphi'_5}$;
						
				\State $|\varphi'_6\rangle=\left(H^{\otimes n-2}\otimes I\right)|\varphi'_5\rangle$;
				
				\State Measure the first $n-2$ qubits of $\ket{\varphi'_6}$:  if the result is not  $0^{n-2}$, then output $f$ is balanced; if the result is $0^{n-2}$, then output $f$ is constant.
			\end{algorithmic}
		\end{algorithm}			
		\end{minipage}
	\end{figure}

  \begin{figure}[H]
		\centering
		\includegraphics[width=6.3in]{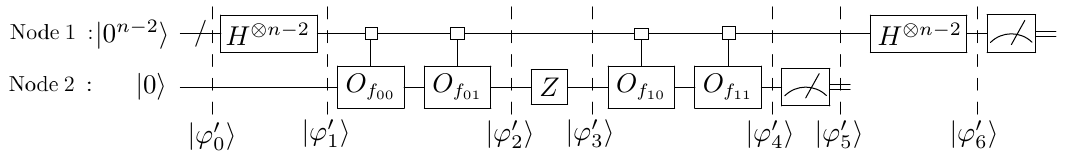}
		\caption{The circuit for the  distributed quantum algorithm for DJ problem with errors (four computing nodes) (Algorithm \ref{algorithmDDJerror} ).}
		\label{algorithmDDJerror_circuit}
	\end{figure}

In the following, we prove that  Algorithm \ref{algorithmDDJerror} is not exact and has error. The state after the first step of Algorithm \ref{algorithmDDJerror} is:
\begin{align}
  |\varphi'_1\rangle&=\frac{1}{\sqrt{2^{n-2}}}\sum_{u\in\{0,1\}^{n-2}}|u\rangle\ket{0}.
\end{align}

 Algorithm \ref{algorithmDDJerror} then queries the oracle $O_{f_{00}}$ and the oracle $O_{f_{01}}$, resulting in the following state:
\begin{equation}
\begin{split}
  |\varphi'_2\rangle&=O_{f_{01}}O_{f_{00}}|\varphi'_1\rangle\\
  &=\frac{1}{\sqrt{2^{n-2}}}\sum_{u\in\{0,1\}^{n-2}}|u\rangle|f_{00}(u)\oplus f_{01}(u)\rangle\\ 
\end{split}
\end{equation}

Then, the quantum gate $Z$ on  $|\varphi'_2\rangle$ is applied to get the following states:
\begin{equation}
\begin{split}
  |\varphi'_3\rangle=&(I\otimes Z)|\varphi'_2\rangle\\
  =&\frac{1}{\sqrt{2^{n-2}}}\sum_{u\in\{0,1\}^{n-2}}(-1)^{f_{00}(u)\oplus f_{01}(u)}|u\rangle|f_{00}(u)\oplus f_{01}(u)\rangle.
\end{split}
\end{equation}

After applying the operator $O_{f_{10}}$ and $O_{f_{11}}$ on $|\varphi'_3\rangle$, we have the following state:
\begin{equation}
\begin{split}
  |\varphi'_4\rangle=&O_{f_{11}}O_{f_{10}}|\varphi'_3\rangle\\
  =&\frac{1}{\sqrt{2^{n-2}}}\sum_{u\in\{0,1\}^{n-2}}(-1)^{f_{00}(u)\oplus f_{01}(u)}|u\rangle|f_{00}(u)\oplus f_{01}(u) \oplus f_{10}(u)\oplus f_{11}(u) \rangle\\
  =&\frac{1}{\sqrt{2^{n-2}}}\sum\limits_{\substack{u\in\{0,1\}^{n-2}\\ f_{00}(u)\oplus f_{01}(u) \oplus f_{10}(u)\oplus f_{11}(u)=0}}(-1)^{f_{00}(u)\oplus f_{01}(u)}|u\rangle\ket{0}\\
  &+\frac{1}{\sqrt{2^{n-2}}}\sum\limits_{\substack{u\in\{0,1\}^{n-2}\\ f_{00}(u)\oplus f_{01}(u) \oplus f_{10}(u)\oplus f_{11}(u)=1}}(-1)^{f_{00}(u)\oplus f_{01}(u)}|u\rangle\ket{1}.
\end{split}
\end{equation}

After measuring on the last qubit of $|\varphi'_4\rangle$, if the result is $1$, then there $\exists$ $u\in\{0,1\}^{n-2}$ such that $f(u00)\oplus f(u01)\oplus f(u11)\oplus f(u10)=1$. Similar to the proof of  Corollary \ref{Cor3}, we know that $f$ is balanced. If the result is $0$, then we get the state:
\begin{equation}
\begin{split}
|\varphi'_5\rangle=&\frac{1}{\sqrt{M'}}\sum\limits_{\substack{u\in\{0,1\}^{n-2}\\ f_{00}(u)\oplus f_{01}(u) \oplus f_{10}(u)\oplus f_{11}(u)=0}}(-1)^{f_{00}(u)\oplus f_{01}(u)}|u\rangle\ket{0}, \\
\end{split}
\end{equation}
where $M'=|\{u\in\{0,1\}^{n-2}|f_{00}(u)\oplus f_{01}(u) \oplus f_{10}(u)\oplus f_{11}(u)=0\}|$.


After Hadamard transformation on the first $n-1$ qubits of $|\varphi'_5\rangle$, we get the following state:
\begin{equation}
\begin{split}
	|\varphi'_6\rangle=&\left(H^{\otimes n-2}\otimes I\right)|\varphi'_5\rangle\\
	=&\frac{1}{\sqrt{2^{n-2}M'}}\sum\limits_{\substack{u,z\in\{0,1\}^{n-2}\\ f_{00}(u)\oplus f_{01}(u) \oplus f_{10}(u)\oplus f_{11}(u)=0}}(-1)^{f_{00}(u)\oplus f_{01}(u)+u\cdot z}\ket{z}\ket{0}.
\end{split}	
\end{equation}


The probability of measuring the first $n-2$ qubits of  $|\varphi'_6\rangle$ with the result of $0^{n-2}$ is
\begin{equation}\label{algorithmDDJerrorprobability}
\left|\frac{1}{\sqrt{2^{n-2}M'}}\sum\limits_{\substack{u\in\{0,1\}^{n-2}\\ f_{00}(u)\oplus f_{01}(u) \oplus f_{10}(u)\oplus f_{11}(u)=0}}(-1)^{f_{00}(u)\oplus f_{01}(u)}\right|^2.
\end{equation}


In the following we give an example to demonstrate that Algorithm \ref{algorithmDDJerror} cannot exactly solve  DJ problem in  distributed scenario with four computing nodes.

\begin{example}
Given a Boolean function $f:\{0,1\}^4\rightarrow\{0,1\}$ for DJ problem, decompose $f$ into four subfunctions: $f_{00}$, $f_{01}$, $f_{10}$ and $f_{11}$, which are assumed to be as follows.
\begin{table}[H]
\begin{center}
\begin{tabular}{|l|c|c|c|c|}
\hline
\diagbox{\makecell[c]{$u$}}{\makecell[c]{$f_w(u)$}}{\makecell[c]{$w$}} &\makecell[c]{00} &\makecell[c]{01}&\makecell[c]{10}&\makecell[c]{11} \\
\hline
 \quad 00 &   1  &  0 & 0  & 1 \\ 
\hline
 \quad 01 &   0  &  0 & 1 & 1\\  
 \hline
 \quad 10 &   0  &  1 & 0 & 1\\ 
  \hline
 \quad 11 &   1  & 0 &  1&  0\\ 
\hline
\end{tabular}
\caption{Example of the structured table for  DJ problem in  distributed scenario with four subfunctions}\label{Tab_Example of a structured table for a DJ problem in a distributed scenario with four subfunctions}
\end{center}
\end{table}

For  Boolean function $f$, it is clear that
\begin{equation}
\begin{split}
M'
=&|\{u\in\{0,1\}^{2}|f_{00}(u)\oplus f_{01}(u) \oplus f_{10}(u)\oplus f_{11}(u)=0\}|\\
=&4.
\end{split}
\end{equation}

For  Boolean function $f$, running Algorithm \ref{algorithmDDJerror}, according to equation (\ref{algorithmDDJerrorprobability}), yields the probability of measuring the first $n-2$ qubits of  $|\varphi'_6\rangle$ with the result of $0^{n-2}$ is
\begin{equation}\label{algorithmDDJerrorprobabilityexample}
\begin{split}
&\left|\frac{1}{\sqrt{2^{2}\cdot 4}}\sum\limits_{\substack{u\in\{0,1\}^{2}\\ f_{00}(u)\oplus f_{01}(u) \oplus f_{10}(u)\oplus f_{11}(u)=0}}(-1)^{f_{00}(u)\oplus f_{01}(u)}\right|^2\\
=&\left|\frac{1}{\sqrt{2^{2}\cdot 4}}\left[(-1)^1+(-1)^0+(-1)^1+(-1)^1\right]\right|^2\\
=&\frac{1}{4}.
\end{split}
\end{equation}

Obviously,  $f$ is  balanced. However, by equation (\ref{algorithmDDJerrorprobabilityexample}), it follows that Algorithm \ref{algorithmDDJerror}  will output $f$ is constant with probability $\frac{1}{4}$, which is wrong.

\end{example}

\end{document}